\definecolor{ForestGreen}{rgb}{0.1333,0.5451,0.1333}
\definecolor{DarkRed}{rgb}{0.8,0,0}
\definecolor{Red}{rgb}{1,0,0}
\newcommand{\stackGeq}[1]{%
	\setbox0=\hbox{${}\mathrel{\stackon[-1pt]{\geq}{\scriptstyle\text{#1\strut}}}{}$}
	\xdef\tmpwd{\dimexpr\the\wd0\relax}
	\kern.5\tmpwd\mathclap{\box0}&\kern.5\tmpwd
}
\newcommand\eps{\epsilon}
\DeclarePairedDelimiterX{\expectarg}[1]{[}{]}{%
	\ifnum\currentgrouptype=16 \else\begingroup\fi
	\activatebar#1
	\ifnum\currentgrouptype=16 \else\endgroup\fi
}
\DeclarePairedDelimiterX{\nicesetarg}[1]{\{}{\}}{%
	\ifnum\currentgrouptype=16 \else\begingroup\fi
	\activatebar#1
	\ifnum\currentgrouptype=16 \else\endgroup\fi
}
\newcommand{\innermid}{\nonscript\;\delimsize\vert\nonscript\;}
\newcommand{\activatebar}{%
	\begingroup\lccode`\~=`\|
	\lowercase{\endgroup\let~}\innermid 
	\mathcode`|=\string"8000
}
\newcommand\opt{\textsc{Opt}\xspace}
\newcommand\alg{\textsc{Alg}\xspace}
\theoremstyle{plain}
\newtheorem{theorem}{Theorem}[section]
\newtheorem{proposition}[theorem]{Proposition}
\newtheorem{lemma}[theorem]{Lemma}
\newtheorem{fact}[theorem]{Fact}
\newtheorem{observation}[theorem]{Observation}
\newlength{\continueindent}
\newcommand*{\ALG@customparshape}{\parshape 2 \leftmargin \linewidth \dimexpr\ALG@tlm+\continueindent\relax \dimexpr\linewidth+\leftmargin-\ALG@tlm-\continueindent\relax}
\apptocmd{\ALG@beginblock}{\ALG@customparshape}{}{\errmessage{failed to patch}}
\def\thm@space@setup{%
	\thm@preskip=\parskip \thm@postskip=0pt
}
\newcommand{\ALGtikzmarkcolor}{black}
\newcommand{\ALGtikzmarkextraindent}{4pt}
\newcommand{\ALGtikzmarkverticaloffsetstart}{-.5ex}
\newcommand{\ALGtikzmarkverticaloffsetend}{-.5ex}
\newcounter{ALG@tikzmark@tempcnta}
\newcommand\ALG@tikzmark@start{%
	\global\let\ALG@tikzmark@last\ALG@tikzmark@starttext%
	\expandafter\edef\csname ALG@tikzmark@\theALG@nested\endcsname{\theALG@tikzmark@tempcnta}%
	\tikzmark{ALG@tikzmark@start@\csname ALG@tikzmark@\theALG@nested\endcsname}%
	\addtocounter{ALG@tikzmark@tempcnta}{1}%
}
\def\ALG@tikzmark@starttext{start}
\newcommand\ALG@tikzmark@end{%
	\ifx\ALG@tikzmark@last\ALG@tikzmark@starttext
	\else
	\tikzmark{ALG@tikzmark@end@\csname ALG@tikzmark@\theALG@nested\endcsname}%
	\tikz[overlay,remember picture] \draw[\ALGtikzmarkcolor] let \p{S}=($(pic cs:ALG@tikzmark@start@\csname ALG@tikzmark@\theALG@nested\endcsname)+(\ALGtikzmarkextraindent,\ALGtikzmarkverticaloffsetstart)$), \p{E}=($(pic cs:ALG@tikzmark@end@\csname ALG@tikzmark@\theALG@nested\endcsname)+(\ALGtikzmarkextraindent,\ALGtikzmarkverticaloffsetend)$) in (\x{S},\y{S})--(\x{S},\y{E});%
	\fi
	\gdef\ALG@tikzmark@last{end}%
}
\apptocmd{\ALG@beginblock}{\ALG@tikzmark@start}{}{\errmessage{failed to patch}}
\pretocmd{\ALG@endblock}{\ALG@tikzmark@end}{}{\errmessage{failed to patch}}
\newcommand\bE{\mathbb{E}}
\newcommand\cA{\mathcal{A}}
\newcounter{customalg}
\crefname{customalg}{algorithm}{algorithms}
\Crefname{customalg}{Algorithm}{Algorithms}
\newenvironment{customalg}[1][]{%
  \par\addvspace{0.1in}
  \refstepcounter{customalg}
  \begin{mdframed}[
    hidealllines=true,
    backgroundcolor=gray!20,
    innerleftmargin=0.35cm,
    innerrightmargin=0.35cm,
    innertopmargin=0.325cm,
    innerbottommargin=0.325cm,
    roundcorner=10pt
  ]
  \noindent\textbf{Algorithm~\thecustomalg}
    \if\relax\detokenize{#1}\relax
    \textbf{:}%
  \else
    \textbf{: #1}%
  \fi
  \par\vspace{1pt}
  \setstretch{0.9}
}{%
  \end{mdframed}%
  \par\addvspace{\topsep}
}
\newcommand\epssig{\scalebox{1.1}{$\epsilon$}_{\scalebox{0.5}{$\Sigma$}}}
\newcommand\epspi{\scalebox{1.1}{$\epsilon$}_{\scalebox{0.5}{$\Pi$}}}
\title{Trading Prophets with Initial Capital}
\author{
 {Yossi Azar\thanks{Department of Computer Science, Tel-Aviv University,
     Tel-Aviv, Israel. Emails: {\tt azar@tauex.tau.ac.il, orvardi@mail.tau.ac.il}.}
    }
 \and
 {Niv Buchbinder\thanks{Department of Statistics and Operations Research, School of Mathematical Sciences, Tel Aviv University, Tel Aviv, Israel. Email: \texttt{niv.buchbinder@gmail.com}. The work of Niv Buchbinder is supported in part by the Israel Science Foundation (ISF) grant no. 3001/24, and the United States - Israel Binational Science Foundation (BSF) grant no. 2022418.}
 }
 \and
 {Roie Levin\thanks{Department of Computer Science, Rutgers University, Piscataway, NJ 08854. Email: {\tt roie.levin@rutgers.edu.}}}
 \and
 {Or Vardi$^{*}$
   } 
}    
\date{}
\begin{document}

\maketitle

\begin{quote}
\textit{``The easiest way to become a millionaire is to be born one.''} -- Folklore
\end{quote}

\begin{abstract} 
    \smallskip
    Correa et al. [EC' 2023] introduced the following
    trading prophets problem. A trader observes a sequence of stochastic prices for a stock, each drawn from a known distribution, and at each time must decide whether to buy or sell.
    Unfortunately, they observed that in this setting it is impossible to compete with a prophet who knows all future stock prices.

    In this paper, we explore the trading prophets problem when we are given initial capital with which to start trading. We show that initial capital is enough to bypass the impossibility result and obtain a competitive ratio of $3$ with respect to a prophet who knows all future  prices (and who also  starts with capital), and we show that this competitive ratio is best possible. We further study a more realistic model in which the trader must pay multiplicative and/or additive transaction costs for trading which model dynamics such as bid-ask spreads and broker fees. 
\end{abstract}

\section{Introduction}

Is there an algorithm for making money in the stock market?

Building on the well-established foundations of optimal stopping theory and prophet inequalities~\cite{Krengel1977SemiamartsAF, prophI-median-2},  Correa et al.\ \cite{CDHOS23} formalized a version of this question which they termed the {\em trading prophets} problem. In their model, a trader observes a daily price for the stock or commodity, where the price $X_i \sim D_i$ is drawn from a known distribution $D_i$. Each day, the trader must decide whether to buy or sell, while holding at most one unit of the asset.\footnote{More generally, the trader may be limited to a fixed number of units, which can be normalized to one for simplicity.}
This constraint may arise from storage limitations, regulatory restrictions, or risk management policies that cap the trader's position. Subject to these constraints, the trader's objective is to maximize profit, defined as the total revenue from sales minus the total cost from purchases, and the benchmark to measure against is the performance of a ``prophet'' who 
has a perfect foresight of all future prices.
    
Correa et al.\ observe a simple example in which the competitive ratio of any algorithm is unbounded. The instance spans two days: on day one the price is $1$, and on day two the price is $1/\eps$ with probability $\eps$ or $0$ otherwise.  The algorithm expects that buying on day one leads to profit $\eps \cdot \nicefrac{1}{\eps} - 1 = 0$, so it might as well never buy and sit the game out entirely.  The prophet on the other hand, with perfect foresight, chooses to buy on day one only when profit is guaranteed on day two, and this grants him strictly positive expected returns of $\eps \cdot (\nicefrac{1}{\eps} - 1) > 0$.
As a result, Correa et al.\ focus on a relaxed (and arguably less realistic) setting in which the daily price distributions are either i.i.d.\ or presented in random order. 

At first glance, the example above appears to close the door on the general 
version of the trading prophets problem. However, a closer look reveals that the impossibility of the instance relies on the buy-in cost for the algorithm being prohibitively expensive; in other words, the algorithm must start with no initial capital. One might argue that this is pathological, and that meaningful instances of the trading prophets problem typically span many time steps and enable the trader to amortize buy-in costs over time.

This observation raises the central question we explore in this work:
    \begin{quote}
        \emph{Do there exist competitive algorithms for trading prophets when both the algorithm and the prophet start with initial capital?}
    \end{quote}

\subsection{Our Results}

Our first result is that, perhaps surprisingly, buy-in cost is the \emph{only} barrier to achieving good competitive algorithms for the trading prophets problem. Specifically, if we allow the algorithm and the prophet to begin the game with one share of stock (or more generally, if we allow the algorithm to incur an additive loss on top of the multiplicative competitive ratio), the prophet's advantage is dramatically reduced, and the following simple algorithm is $3$-competitive.

\begin{customalg}[Buy Low Sell High]
\label{alg:buylsellh}
For all time steps $i=0,1,\ldots, T$, observe the price $X_i \sim D_i$.  
 Then,
\begin{itemize}
    \item If the algorithm holds the stock, and $X_i> \bE[X_{i+1}]$, sell the stock.
    \item If the algorithm does not hold the stock, and $X_i\leq \bE[X_{i+1}]$, buy the stock.
\end{itemize} \vspace{-0.05in}
\end{customalg}

\Cref{alg:buylsellh} has the added advantage that it is myopic, in that it does not require full foreknowledge of all future distributions (but only the expected stock price tomorrow). In fact it requires only a single bit of information about the future: whether the expected price of the stock tomorrow higher or lower than the today's price. This algorithm generalizes that of \cite{RCV25} that was designed 
for the i.i.d/random order model (though theirs also works in a more general setting involving multiple stocks).
We extend the observation in \cite{RCV25} and show that this simple strategy is in fact optimal among all online algorithms for the adversarial order trading prophets problem. Our main result is:
\begin{theorem}\label{thm:main1}
    \Cref{alg:buylsellh} is $3$-competitive for the adversarial order trading prophets problem. Moreover, if $\bE[X_i]=\mu$ for all $i=1, \ldots, T$, then the algorithm is $2$-competitive. These competitive ratios are best possible. 
\end{theorem}
It is worth noting that the proof of this theorem 
relies solely on the initial stock granted to both the algorithm and the optimal solution, with no need for any further additive term in the competitive ratio.  \vspace{-0.2in}

\urldef{\BAS}\url{https://en.wikipedia.org/wiki/Bid-ask_spread}
\paragraph{Adding transaction costs or bid-ask spread.}
    In reality, no trader can buy and sell the stock at the same price, and this can happen due to several reasons. Firstly, in trading platforms there is usually a {\em bid-ask spread}, i.e.\ an inherent difference between the cost of buying and selling the same commodity that arises naturally due to market dynamics.\footnote{See, e.g., {\BAS}.} 
    Secondly, the trader often pays a {\em transaction cost} to the broker either as an added percentage of the price or as a fixed flat cost. To model these aspects, we study the {\em trading prophets problem with transaction costs} in which there are two additional parameters $\epspi, \epssig \in [0,1)$. At each time step $i$, the base price is still sampled as $X_i \sim D_i$, but the algorithm can now only buy the stock at a price of  $(1+\epspi)X_i+\epssig$ or sell the stock at a price of $(1-\epspi)X_i-\epssig$. We refer to $\epspi$ and $\epssig$ as the multiplicative and additive transaction costs respectively. 

This seemingly small change to the model makes the trading prophets problem significantly harder. In Appendix \ref{sec:bad-example} we show that even in the special case when $\epspi=0$ and there is only additive transaction cost, no myopic algorithm that at time $i$ uses only information about $D_{i+1}$ can be competitive.\footnote{We actually show the stronger fact that even a myopic algorithm with knowledge of the \emph{realization} of $X_{i+1}$ cannot be competitive.} In fact, even in the i.i.d.\ case, the natural generalization of \Cref{alg:buylsellh} that buys the stock whenever $X_i \leq \bE[X_{i+1}]-\eps$ and sells the stock whenever $X_i\geq \bE[X_{i+1}]+\eps$ fails, and new ideas are needed.
Nevertheless, we show that the following algorithm is $2$-competitive for the i.i.d.\ setting.

 \begin{customalg}[Buy Below Sell Above]
\label{alg:trnsctcost}
Define $z_H\geq z_L$ to be thresholds
such that for all $i \in [T]$,\footnote{Recall that $X_i$ are i.i.d. We show in \Cref{sec:pre} that such values $z_H$ and $z_L$ exist without loss of generality. \vspace{-0.14in}}
 \[Pr[X_i\geq z_H]=Pr[X_i\leq z_L],\]
 and
 \[z_H(1-\epspi)-\epssig = z_L(1+\epspi)+\epssig .\]

For all time steps $i=0,1,\ldots, T$, observe the price $X_i \sim D$.  
 Then,
\begin{itemize}
    \item If the algorithm holds the stock, and $X_i\geq z_H$, sell the stock.
    \item If the algorithm does not hold the stock, and $X_i\leq z_L$, buy the stock.
\end{itemize} 
\end{customalg}

Our algorithm is again myopic and only needs to know the two threshold values $z_L\leq z_H$ that depend on the distribution. 
We note that the median value $M$, i.e. the value for which $Pr[X\geq M]=Pr[X\leq M]=\frac{1}{2}$, always satisfies  $z_L \le M \le z_H$  (but $M$ need not be precisely in the middle of the range). When $\epspi=\epssig=0$ then $z_H=z_L=M$ and our algorithm reduces to the algorithm analyzed in \cite{CDHOS23} for the i.i.d.\ and random arrival models. Our second result is:
\begin{theorem}\label{thm:main2}
\Cref{alg:trnsctcost} is $2$-competitive for the trading prophets problem with transaction costs in the i.i.d.\ model. 
\end{theorem}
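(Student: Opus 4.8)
The plan is to analyze both the algorithm and the prophet against the ``break‑even'' price $v := z_H(1-\epspi)-\epssig = z_L(1+\epspi)+\epssig$ that is hard‑wired into the thresholds. Two facts motivate this: \Cref{alg:trnsctcost} sells only when $X_i \ge z_H$, so it always collects net revenue $s_i := (1-\epspi)X_i - \epssig \ge v$, and it buys only when $X_i\le z_L$, so it always pays net cost $b_i := (1+\epspi)X_i+\epssig \le v$. Accordingly I would charge the initial share (and any share still held at the end) at price $v$, so that the expected profit of \emph{any} trading strategy equals $\bE\big[\sum_{\text{sales }i}(s_i-v) + \sum_{\text{purchases }j}(v - b_j)\big]$. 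Writing $\alpha_i := s_i - v$ and $\beta_i := v - b_i$, the ``high days'' ($X_i\ge z_H$) are exactly those with $\alpha_i>0$, the ``low days'' ($X_i\le z_L$) those with $\beta_i>0$, and on ``middle days'' ($X_i\in(z_L,z_H)$) both $\alpha_i,\beta_i\le 0$; by the defining property of the thresholds, $p := \Pr[X\ge z_H] = \Pr[X\le z_L]\le\tfrac12$.

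For the algorithm I would track its holding indicator $S_i\in\{\textsf{hold},\textsf{no-hold}\}$ just before day $i$, with $S_0=\textsf{hold}$. Because the prices are i.i.d., $S_i$ is a symmetric two‑state Markov chain that switches with probability exactly $p$ at every step (it flips precisely when the day is high while holding or low while not holding), so $\Pr[S_i=\textsf{hold}] = \tfrac12(1+(1-2p)^i)$, $\Pr[S_i=\textsf{no-hold}] = \tfrac12(1-(1-2p)^i)$, and $S_i$ is independent of $X_i$. Since the algorithm realizes $\alpha_i$ exactly when $S_i=\textsf{hold}$ and day $i$ is high, and realizes $\beta_i$ exactly when $S_i=\textsf{no-hold}$ and day $i$ is low, I obtain the exact identity $\bE[\alg] = \bE[\alpha^+]\sum_{i=0}^{T}\Pr[S_i=\textsf{hold}] + \bE[\beta^+]\sum_{i=0}^{T}\Pr[S_i=\textsf{no-hold}]$, in which both state‑sums differ from $\tfrac{T+1}{2}$ by an additive $O(1/p)$. (Taking $\epspi=\epssig=0$ recovers the symmetric‑random‑walk computation behind the i.i.d.\ analysis of \cite{CDHOS23}.)

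The heart of the proof is to bound $\bE[\opt]$ by essentially twice this. The prophet's profit is $\sum_{\text{sales}}\alpha_i + \sum_{\text{purchases}}\beta_j$; deleting its (necessarily loss‑making) sales on non‑high days and purchases on non‑low days only increases it, and since consecutive sales must be separated by a purchase and vice versa, the days on which it sells at a high price contain no two \emph{consecutive} days, and likewise for its low‑priced purchases. Hence $\opt \le \mathrm{MWIS}_{\alpha}(\text{high days}) + \mathrm{MWIS}_{\beta}(\text{low days})$, where $\mathrm{MWIS}$ denotes a maximum‑weight independent set in the path graph joining same‑type days that are adjacent in the realized sequence. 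The i.i.d.\ assumption makes this tractable: the maximal blocks of consecutive high days (and of low days) have lengths stochastically dominated by a $\mathrm{Geometric}(1-p)$ variable, whose mean is at most $2$ since $p\le\tfrac12$, so each block contributes at most its total weight, which in expectation is at most $\tfrac{1}{1-p}\le 2$ times the per‑day value; summing over blocks gives $\bE[\opt]\le (T{+}1)\big(\bE[\alpha^+]+\bE[\beta^+]\big)$. Comparing with the state‑sum identity, this already yields $\bE[\opt]\le 2\bE[\alg]$ whenever $\bE[\alpha^+]\ge\bE[\beta^+]$; in the reverse case one sharpens the prophet bound by also exploiting that its high‑priced sales and low‑priced purchases must alternate, so its two independent sets cannot be realized at once, and by refining the block‑length estimate (here the identity $\bE[\max_{i\le N}W_i]=\int_0^\infty(1-\bE[F(t)^N])\,dt$ for geometric $N$ is convenient).

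I expect the delicate parts to be exactly these. First, the prophet is not confined to high and low days: it can sell or buy on a middle day $X_i\in(z_L,z_H)$ at a small loss precisely in order to chain together additional low‑priced purchases (or high‑priced sales) that \Cref{alg:trnsctcost}'s strict thresholds would forgo --- which is why the bound must pass through the independent‑set/block structure rather than a naive ``at most one trade per maximal run of same‑type days,'' and why the $\epspi=0$ additive‑cost case already defeats myopic algorithms in adversarial order (\Cref{sec:bad-example}); one must charge each such middle‑day trade against the savings it unlocks. Second, pinning the constant to exactly $2$ rather than $2+o_T(1)$ forces careful bookkeeping of boundary effects --- the algorithm starts holding (hence makes slightly more sales than purchases over the horizon), and the prophet's two maximum independent sets are not simultaneously realizable because trades alternate --- and the crux of the write‑up will be showing that these two effects exactly absorb the residual slack.
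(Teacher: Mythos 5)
Your core idea---accounting for both ALG and OPT against the break-even price $v := z_H(1-\epspi)-\epssig = z_L(1+\epspi)+\epssig$, so that a sale at a high day contributes $\alpha_i = s_i - v \ge 0$ and a purchase at a low day contributes $\beta_i = v - b_i \ge 0$---is exactly the mechanism underlying the paper's proof, and your Markov-chain computation for $\bE[\alg]$ is correct, if more elaborate than needed. The paper gets the ALG bound in two lines: independence of $X_i$ from the current holding state gives $\bE[B_i + S_i] = p$ for every $i$, hence $\bE[B+S] = pT$; and since the algorithm starts holding, $S-1 \le B \le S$ pointwise, so $\bE[S] \ge pT/2$ and $\bE[B] \le pT/2$, avoiding the exact mixing-rate calculation.

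The OPT bound is where the proposal has a real gap. The claim that ``deleting its sales on non-high days and purchases on non-low days only increases'' the prophet's profit is not correct as stated: a purchase on a middle day (with $\beta_j < 0$) can be what enables a second profitable high-day sale, and deleting that purchase forces you to also delete one of the two adjacent sales, which can strictly decrease profit. You flag this yourself (``one must charge each such middle-day trade against the savings it unlocks'') but do not resolve it, and the MWIS machinery you reach for doesn't resolve it either, since it presupposes exactly the deletion you cannot justify. The paper's clean fix is a \emph{relaxation of the model for OPT}, not a pruning of OPT's trades: grant the prophet the extra power to buy and sell at the fixed price $v$ at any moment. Under this relaxation, any buy-and-hold can be replaced by buy-then-immediately-resell-at-$v$ plus buy-at-$v$-then-sell, so days decouple completely, the optimal relaxed prophet trades on every high day and every low day, and one reads off $\bE[\opt] \le v + pT\big[v_H(1-\epspi) - v_L(1+\epspi) - 2\epssig\big] = v + T\big(\bE[\alpha^+]+\bE[\beta^+]\big)$ directly, with no independent-set or block-length analysis. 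The geometric-block calculation in your sketch in fact just re-derives the trivial bound $\mathrm{MWIS}\le \text{total weight}$, so it does no additional work. Finally, your worry about landing exactly on $2$ is unnecessary: the paper's competitiveness definition (\Cref{sec:pre}) allows an additive $O(1)$ term, and the $v$ coming from the initial share absorbs all boundary slack; in the case $\bE[\alpha^+] < \bE[\beta^+]$, your Markov-chain deficit is at most $\tfrac{1}{4p}\bE[\beta^+] \le v/4 = O(1)$, so no sharpened prophet bound or case analysis is required.
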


We leave as an intriguing open question whether there exists a constant competitive algorithm for the trading prophets problem with transaction costs in the adversarial order model.

\subsection{Related Work}

\paragraph{Online Trading.}

Correa et al.\ \cite{CDHOS23} initiated the study of the trading prophets problem. They obtained a tight $2$-competitive algorithm for the i.i.d.\ model.  They also obtained a $16$-competitive algorithm for the non-i.i.d.\ random arrival model. Following their work Rajput, Chiplunkar and Vaish~\cite{RCV25} studied a generalization of the basic setting in which there are multiple stocks whose prices at a given time may be correlated. The algorithm can hold only a subset of the stocks, or more generally there are matroid constraints on the feasible subsets of stocks that it can hold. They designed algorithms for this more general setting in the i.i.d and random arrival models.

Very recently, Azar et al.\ \cite{azar2025competitivebundletrading}
studied a significantly more general model in which the trader can buy and sell $n$ types of goods in bundles. Unlike \cite{CDHOS23,RCV25}, these authors no longer assume prices are stochastically generated, but instead changing in arbitrary adversarial ways; they also allow buyers and sellers to arrive at separate time steps, which is the same as allowing for unbounded bid-ask spreads. Because of the additional difficulty, Azar et al.\ only managed to obtain logarithmic competitive ratios and their results require resource augmentation, i.e. they compare their algorithm to an optimal solution that must trade at slightly worse prices. It is an interesting question to explore intermediate models that bridge the stochastic and fully adversarial settings.

\paragraph{Prophet Inequalities.} 

The term ``prophet inequality'' was originally used to describe the sell-only analog of our setting. The first such result is due to Krengel and Sucheston \cite{Krengel1977SemiamartsAF} (see also \cite{Hill1982ComparisonsOS} for an early survey). Other applications to computer science were given by Hajiaghayi et al.\ \cite{DBLP:conf/aaai/HajiaghayiKS07} and Chawla et al. \cite{DBLP:conf/stoc/ChawlaHMS10}.

The literature on prophet inequalities is too vast to cover fairly here. We briefly mention some recent trends in this area: blends of prophet and secretary problems ~\cite{EsfandiariHLM17,EhsaniHKS24,CorreaSZ21}, single (or few) sample variants where the algorithm only gets sample access to the distributions $D_i$ \cite{azar14prophet,azar-first-sspi,CaramanisDFFLLP22,DBLP:journals/corr/abs-2304-02063}, and versions that explore correlation structures beyond product distributions \cite{MR0913569,DBLP:journals/tcs/ChekuriL24,DBLP:conf/sigecom/LivanosP024,DBLP:conf/wine/CaragiannisGLW21,DBLP:conf/ipco/GuptaHKL24,DBLP:conf/stoc/DughmiKP24,DBLP:journals/teco/ImmorlicaSW23,DBLP:conf/sigecom/MaurasMR24}.

\section{Preliminaries}\label{sec:pre}

We begin by describing the trading prophets problem formally. A trader observes a sequence of $T$ prices $X_1, X_2, \ldots, X_T$ for a single stock, where $X_i\sim D_i$ is drawn from a non-negative distribution $D_i$ that is known to the trader. 
After the price $X_i$ is revealed, the trader must decide whether to buy or sell at that price, and the objective is to maximize profit. We assume the trader may hold at most one unit of stock at any point in time (see \cite{CDHOS23} for why this is without loss of generality). Define $\mu_i= \bE[X_i]$ be the expected price at time $i$. For convenience, we assume there are extra distributions $X_0$ and $X_{T+1}$ that are both 0 with probability 1: these allow the algorithm and the prophet to purchase a stock initially for a price of $0$, and guarantee our algorithms are well defined at time $T$.

Using the notation $[i,j] := \{i, \ldots, j\}$ for integers $i$ and $j$, define $B\subseteq [0,T+1]$ and $S\subseteq[1,T+1]$ to be the subsets of times in which the algorithm buys and sells the stock respectively. Then the profit of the algorithm is $\alg := \sum_{i\in S}X_i- \sum_{i\in B}X_i$. When we write $\bE[\alg]$, we take the expectation over the random variables $X_1, \ldots, X_i$ (and any randomness of the algorithm). 

Let \opt denote the maximum profit achievable offline with full information about the realizations of $X_1, \ldots, X_T$ and, currently, with no transaction costs. As observed in \cite{CDHOS23}, this optimal solution always buys at local minima (time steps for which $X_i\leq \max(X_{i-1},X_{i+1})$) and sells at local maxima (time step for which $X_i\geq \max(X_{i-1},X_{i+1})$), and since successive difference terms telescope, we can write $\opt=\sum_{i=1}^{T}(X_i-X_{i-1})_+$, where $(X_i-X_{i-1})_+ \triangleq \max\{0,(X_i-X_{i-1})\}$. 

\begin{observation}\label{obs-opt}
For the trading prophets problem, 
    $\bE[\opt] =\sum_{i=1}^{T}\bE[(X_i-X_{i-1})_+]$.
\end{observation}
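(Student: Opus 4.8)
The plan is to reduce the statement to the pointwise identity $\opt = \sum_{i=1}^{T}(X_i - X_{i-1})_+$ already asserted in the preliminaries, and then finish by linearity of expectation. Since this identity is the real content, I would first prove it carefully for an arbitrary fixed realization $x_0 = 0, x_1, \ldots, x_T, x_{T+1} = 0$, by establishing matching upper and lower bounds on the offline optimum. Throughout I use the conventions $X_0 = X_{T+1} = 0$ from the preliminaries, so that ``starting with a share'' is the same as buying at time $0$ for free and there is no cost to discarding a share at the end.

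For the upper bound, I would observe that any feasible offline strategy, because it holds at most one unit at a time, decomposes its activity into a sequence of disjoint ``round trips'': buy at time $b_1$, sell at time $s_1$, buy at $b_2 > s_1$, sell at $s_2$, and so on, with total profit $\sum_k (x_{s_k} - x_{b_k})$ (a dangling final purchase only lowers the profit and can be ignored). Writing each term as a telescoping sum, $x_{s_k} - x_{b_k} = \sum_{i = b_k + 1}^{s_k}(x_i - x_{i-1}) \le \sum_{i=b_k+1}^{s_k}(x_i - x_{i-1})_+$, and using that the index intervals $(b_k, s_k]$ are pairwise disjoint subsets of $[1, T+1]$, the profit is at most $\sum_{i=1}^{T+1}(x_i - x_{i-1})_+$; the extra term $(x_{T+1} - x_T)_+ = (0 - x_T)_+$ vanishes since prices are non-negative, giving the bound $\sum_{i=1}^{T}(x_i - x_{i-1})_+$.

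For the lower bound I would exhibit a strategy attaining this value. Partition the set of up-steps $U = \{\, i \in [1,T] : x_i > x_{i-1} \,\}$ into maximal runs of consecutive integers; on a run $\{a, a+1, \ldots, b\}$ one has $x_{a-1} < x_a < \cdots < x_b$, so buying at time $a-1$ (using $x_0 = 0$ when $a = 1$) and selling at time $b$ yields profit $x_b - x_{a-1} = \sum_{i=a}^{b}(x_i - x_{i-1})$. Maximality of the runs forces a gap of at least one index between consecutive runs, so these round trips are feasible — non-overlapping, correctly ordered, and holding at most one unit — and summing over all runs gives exactly $\sum_{i\in U}(x_i - x_{i-1}) = \sum_{i=1}^{T}(x_i - x_{i-1})_+$. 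Combining with the upper bound yields the pointwise identity.

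Finally, taking expectations on both sides and applying linearity over the finite sum gives $\bE[\opt] = \sum_{i=1}^{T}\bE[(X_i - X_{i-1})_+]$; this is legitimate because each $(X_i - X_{i-1})_+ \le X_i$ has finite mean $\mu_i$. I do not expect a serious obstacle here; the only points requiring care are the boundary conventions $X_0 = X_{T+1} = 0$ (which is precisely what makes the initial purchase free and kills the spurious last telescoping term) and the feasibility check that maximal up-runs are separated, so that the constructed round trips respect the ``at most one unit'' constraint.
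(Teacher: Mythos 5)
Your proof is correct and in essence the same as the paper's: the paper simply asserts the pointwise identity $\opt = \sum_{i=1}^{T}(X_i - X_{i-1})_+$ (citing~\cite{CDHOS23} and the buy-at-local-minima / sell-at-local-maxima plus telescoping argument) and then takes expectations by linearity. You fill in that pointwise identity with explicit matching upper and lower bounds, but the underlying mechanism --- telescoping round trips for the upper bound and buying/selling at the endpoints of maximal up-runs (which are exactly the local minima and maxima) for the lower bound, with the $X_0 = X_{T+1} = 0$ boundary conventions --- is the same.
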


In the more general version of the trading prophets problem with transaction costs, we are given two additional parameters $\epspi, \epssig \in [0,1)$. At each time step the algorithm, as well as the offline prophet, can buy the stock at a price of $(1+\epspi)X_i+\epssig$ and sell the stock at a price of $(1-\epspi)X_i-\epssig$. We observe that in this case it is no longer true that $\bE[\opt] =\sum_{i=1}^{T}\bE[(X_i-X_{i-1})_+]$, because the difference terms no longer telescope.

We say that an algorithm is $\alpha$-competitive if  
\begin{equation}
  \bE[\opt] \leq \alpha \cdot \bE[\alg] +c,  
\end{equation}
where $c$ is a constant independent of the length of the sequence $T$. We remark that allowing such an additive constant is common in online computation and competitive analysis.

To implement our algorithm, we require the following fact.

\begin{fact}
    For any real valued, nonnegative random variable $X$ with continuous cdf, and any $\epspi,\epssig \in [0,1)$, there exist thresholds $z_L$ and $z_H$ such that simultaneously
    \begin{enumerate}[label = (\roman*)]
        \item $\displaystyle Pr[X\geq z_H]=Pr[X\leq z_L]$, and
        \item $\displaystyle z_H(1-\epspi)-\epssig = z_L(1+\epspi)+\epssig$.
    \end{enumerate}
\end{fact}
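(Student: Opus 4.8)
The plan is to eliminate one of the two variables using the linear constraint (ii) and then find the remaining one by a one-dimensional intermediate value theorem argument. First I would observe that since $\epspi \in [0,1)$ we have $1-\epspi>0$, so constraint (ii) is equivalent to
\[
z_H \;=\; \phi(z_L) \;:=\; \frac{(1+\epspi)\,z_L + 2\epssig}{1-\epspi}.
\]
For $z_L \ge 0$ this already forces $z_H \ge z_L$, because $\phi(z_L)-z_L = \frac{2\epspi\,z_L + 2\epssig}{1-\epspi}\ge 0$; hence any solution with $z_L\ge 0$ we exhibit automatically respects the ordering $z_H \ge z_L$ used in \Cref{alg:trnsctcost}.

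Next I would define the single-variable function
\[
g(t) \;:=\; Pr[X \le t] \;-\; Pr\bigl[X \ge \phi(t)\bigr], \qquad t \in [0,\infty),
\]
and search for a zero of $g$; a root $z_L$ together with $z_H := \phi(z_L)$ gives exactly a pair satisfying (i) and (ii). Since $X$ has a continuous cdf $F$, the map $t\mapsto Pr[X\le t]=F(t)$ is continuous, and so is $t\mapsto Pr[X\ge \phi(t)] = 1 - F(\phi(t))$ (here the continuity of $F$, i.e.\ the absence of atoms, is precisely what upgrades right-continuity of $t \mapsto Pr[X \ge \phi(t)]$ to genuine continuity); as $\phi$ is affine and increasing, $g$ is continuous on $[0,\infty)$.

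It remains to check the sign of $g$ at the endpoints. At $t=0$: since $X \ge 0$ and $F$ is continuous, $Pr[X\le 0]=0$, while $Pr[X\ge \phi(0)]\ge 0$, so $g(0)\le 0$. As $t\to\infty$ we have $Pr[X\le t]\to 1$ and $\phi(t)\to\infty$, so $Pr[X\ge \phi(t)]\to 0$, giving $g(t)\to 1>0$. By the intermediate value theorem there is some $z_L\in[0,\infty)$ with $g(z_L)=0$, and then $z_H:=\phi(z_L)\ge z_L\ge 0$ satisfies both conditions, as desired. I do not expect a genuine obstacle here; the only point demanding care is the continuous-cdf hypothesis, which is used twice — once to make $g$ continuous, and once to guarantee $Pr[X\le 0]=0$, which is what pins down the sign $g(0)\le 0$ needed to start the intermediate value argument.
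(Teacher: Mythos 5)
Your proof is correct and follows essentially the same approach as the paper's: use the linear constraint (ii) to express one threshold as an affine function of the other, then apply the intermediate value theorem to the resulting one-variable function built from (i). The only (cosmetic) difference is that you parametrize by $z_L$ where the paper parametrizes by $z_H$, and you add the useful remark that $z_L\ge 0$ automatically forces $z_H\ge z_L$.
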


\begin{proof}
    Given a nonnegative real number $h$, define $\ell(h) = \frac{h \cdot (1-\epspi) - 2\cdot \epssig}{1+\epspi}$, such that $z_H = h$ and $Z_L = \ell(h)$ satisfy (ii). Then define the function
    \[f(h) = \Pr(X \geq h) - \Pr(X \leq \ell(h)).\]
    When $h = 0$, we have $f(h) = 1$, and when $h \rightarrow \infty$, we have $f(h) = -1$, so by the intermediate value theorem, there exists a value $h^*$ such that $z_H = h^*$ and $z_L = \ell(h^*)$ satisfy both (i) and (ii).
\end{proof}
As observed in \cite{CDHOS23}, this assumption is without loss of generality, as we can add to each price an auxiliary independent uniform perturbation $\delta_i \in [-\delta,\delta]$ and taking $\delta$ arbitrary small. We remark that after adding this perturbation, our algorithm may become randomized.

\section{Trading Prophets in the Adversarial Model}\label{sec:adversarial}

In this section we analyze \Cref{alg:buylsellh} for the trading prophets problem, proving \Cref{thm:main1}.

The upper bound on the competitive ratio follows directly by the following two lemmas.
\begin{lemma}\label{lem:alg-basic}
    \Cref{alg:buylsellh} satisfies $\bE[\alg]  = \sum_{i=1}^{T}\bE[(\mu_{i}-X_{i-1})_+]$.
\end{lemma}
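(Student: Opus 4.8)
The plan is to convert the (ostensibly stateful) evolution of \Cref{alg:buylsellh} into a closed-form, per-step expression for its profit, and then take expectations exploiting independence across time. The first step is to track the holding state: let $a_i \in \{0,1\}$ indicate whether the algorithm holds the stock immediately after processing step $i$, with the convention $a_{-1}=0$ (the algorithm starts empty and buys at step $0$ for price $X_0 = 0$). The crucial observation is that this variable is memoryless: regardless of the value of $a_{i-1}$,
\[ a_i = \mathbbm{1}[X_i \le \mu_{i+1}]. \]
Indeed, if the algorithm holds the stock going into step $i$, it retains it exactly when the sell condition fails, i.e.\ when $X_i \le \mu_{i+1}$; and if it does not hold the stock, it acquires it exactly when the buy condition holds, again $X_i \le \mu_{i+1}$. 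As sanity checks, $a_0 = \mathbbm{1}[0 \le \mu_1] = 1$, matching the free purchase at time $0$, and after the perturbation of \Cref{sec:pre} we have $\Pr[X_T \le \mu_{T+1}] = \Pr[X_T \le 0] = 0$, so $a_T = 0$ almost surely and the algorithm ends empty.

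Next I would express the profit as a telescoping sum. The net cash flow of the algorithm at step $i$ is precisely $-(a_i - a_{i-1})X_i$: a purchase corresponds to $a_i - a_{i-1} = +1$ and costs $X_i$, a sale corresponds to $a_i - a_{i-1} = -1$ and earns $X_i$, and inaction corresponds to $a_i = a_{i-1}$ and costs nothing. Hence $\alg = -\sum_{i=0}^T (a_i - a_{i-1})X_i$, and summation by parts (using $a_{-1}=0$ and $a_T X_T = 0$ a.s.) rearranges this to $\alg = \sum_{i=0}^{T-1} a_i(X_{i+1} - X_i)$. Substituting the closed form for $a_i$ and reindexing $j = i+1$,
\[ \alg = \sum_{i=1}^{T} \mathbbm{1}[X_{i-1} \le \mu_i]\,(X_i - X_{i-1}). \]

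Finally, taking expectations and conditioning on $X_{i-1}$ (which is independent of $X_i$),
\[ \bE\bigl[\mathbbm{1}[X_{i-1} \le \mu_i](X_i - X_{i-1}) \,\big|\, X_{i-1}\bigr] = \mathbbm{1}[X_{i-1} \le \mu_i](\mu_i - X_{i-1}) = (\mu_i - X_{i-1})_+, \]
and summing over $i$ yields $\bE[\alg] = \sum_{i=1}^{T} \bE[(\mu_i - X_{i-1})_+]$, as claimed.

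I expect the only genuine content to be the memorylessness observation in the first step: once the holding indicator is identified with the single-threshold test $\mathbbm{1}[X_i \le \mu_{i+1}]$, the telescoping and the independence argument are routine. The remaining subtlety is purely at the boundary, namely ensuring that the free purchase at time $0$ (encoded by $X_0 \equiv 0$) and the fact that the algorithm ends empty (so the final term of the Abel summation vanishes) are handled correctly, which is precisely where the continuity assumption from \Cref{sec:pre} is invoked.
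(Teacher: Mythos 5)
Your proof is correct, and the essential insight is the same as the paper's: the holding state after step $i$ is memoryless, determined solely by whether $X_i \le \mu_{i+1}$. The paper states this in expectation (``the probability the algorithm holds the stock at time $i$ is exactly $p_{i-1}$'') and then computes $\bE[\alg]$ directly via the law of total expectation and an index shift; you instead upgrade it to the pointwise identity $a_i = \mathbbm{1}[X_i \le \mu_{i+1}]$, do the telescoping/Abel summation on the random variable $\alg$ itself, and only then take expectations using independence. Your route is a bit cleaner and makes the structure $\alg = \sum_i a_i(X_{i+1}-X_i)$ transparent, whereas the paper arrives at the same endpoint via purely algebraic manipulation of conditional expectations. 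One small nit: you do not need the continuity perturbation to kill the boundary term, and the claim ``$a_T = 0$ almost surely'' is both unnecessary and not quite what the perturbation guarantees (if $X_T$ had an atom near $0$, the uniform perturbation could place positive mass on $X_T \le 0$). What you actually need is only $a_T X_T = 0$, which holds deterministically: $a_T = \mathbbm{1}[X_T \le \mu_{T+1}] = \mathbbm{1}[X_T \le 0]$, and since $X_T \ge 0$, the indicator is $1$ only when $X_T = 0$, so $a_T X_T = 0$ in all cases with no regularity assumption.
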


\begin{lemma}\label{lem:opt-basic}
It holds that 
    $\bE[\opt] \leq 3 \cdot \sum_{i=1}^{T}\bE[(\mu_{i}-X_{i-1})_+]$. Moreover, if $\bE[X_i]=\mu$ for all $i\in[1,T]$, then $\bE[\opt] \leq 2 \cdot \sum_{i=1}^{T}\bE[(\mu-X_{i-1})_+]$
\end{lemma}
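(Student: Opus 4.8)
The starting point is Observation~\ref{obs-opt}, which tells us $\bE[\opt] = \sum_{i=1}^{T}\bE[(X_i-X_{i-1})_+]$, so it suffices to bound this sum by $3\Phi$ (resp.\ $2\Phi$ when the means coincide), where $\Phi := \sum_{i=1}^{T}\bE[(\mu_i - X_{i-1})_+]$ is the quantity appearing in the statement (and equals $\bE[\alg]$ by Lemma~\ref{lem:alg-basic}). The only tool needed is the subadditivity of the positive part, $(x-a)_+\le(x-c)_+ + (c-a)_+$ for all reals $x,a,c$. Applying it pointwise with $c=\mu_i$ and taking expectations gives, for each $i$,
\[
\bE[(X_i-X_{i-1})_+]\ \le\ \bE[(\mu_i - X_{i-1})_+] + \tau_i,\qquad \tau_i:=\bE[(X_i-\mu_i)_+].
\]
Because $\bE[X_i]=\mu_i$, the positive and negative parts of $X_i-\mu_i$ have equal expectation, so $\tau_i=\bE[(X_i-\mu_i)_+]=\bE[(\mu_i-X_i)_+]$. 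Summing over $i$ yields the master inequality $\bE[\opt]\le \Phi + \sum_{i=1}^{T}\tau_i$, and it remains to control $\sum_i\tau_i$.

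When $\mu_i=\mu$ for all $i\in[1,T]$, this is immediate: reindexing, $\Phi=\sum_{i=1}^{T}\bE[(\mu-X_{i-1})_+]=\sum_{k=0}^{T-1}\bE[(\mu-X_k)_+]$, which differs from $\sum_{i=1}^{T}\tau_i=\sum_{i=1}^{T}\bE[(\mu-X_i)_+]$ only in the boundary terms; since $X_0\equiv 0$ we have $\bE[(\mu-X_0)_+]=\mu$, and since $X_T\ge 0$ we have $\bE[(\mu-X_T)_+]\le\mu$, so $\sum_i\tau_i\le\Phi$ and $\bE[\opt]\le 2\Phi$, as required.

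For the general case the plan is to shift each $\tau_i$ forward by one time step so that it becomes a genuine term of $\Phi$. Using subadditivity again with $c=\mu_{i+1}$, together with the identity $\bE[(X_i-\mu_{i+1})_+]=\bE[(\mu_{i+1}-X_i)_+]+(\mu_i-\mu_{i+1})$ (once more a consequence of $\bE[X_i]=\mu_i$), a short cancellation collapses the mean-difference terms to
\[
\tau_i\ \le\ \bE[(\mu_{i+1}-X_i)_+] + (\mu_i-\mu_{i+1})_+.
\]
Summing over $i\in[1,T]$: the first summand reindexes to $\Phi-\mu_1\le\Phi$ (using $X_0\equiv X_{T+1}\equiv 0$), while for the second one notes that $\sum_{i=1}^{T}(\mu_i-\mu_{i+1})_+$ and $\sum_{i=1}^{T}(\mu_i-\mu_{i-1})_+$ are in fact equal — both telescope to $\mu_1+\sum_{i=1}^{T-1}(\mu_{i+1}-\mu_i)_+$ using $\mu_0=\mu_{T+1}=0$ — and $\sum_{i=1}^{T}(\mu_i-\mu_{i-1})_+\le\Phi$ by applying Jensen's inequality term by term ($\bE[(\mu_i-X_{i-1})_+]\ge(\mu_i-\bE[X_{i-1}])_+$). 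Hence $\sum_i\tau_i\le 2\Phi$, giving $\bE[\opt]\le 3\Phi$ with no additive term.

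The routine parts are the two positive-part identities and the telescoping bookkeeping. The one step that needs genuine care — and is exactly what pins the constant at $3$ — is the forward-shift bound $\tau_i\le\bE[(\mu_{i+1}-X_i)_+]+(\mu_i-\mu_{i+1})_+$ together with the observation that the accumulated ``mean-decrease'' mass $\sum_i(\mu_i-\mu_{i+1})_+$ can itself be charged back to $\Phi$. I expect this to be the main obstacle; once it is in place, everything else is mechanical.
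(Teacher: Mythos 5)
Your proof is correct, and it follows a genuinely different decomposition than the paper's. The paper splits $(X_i - X_{i-1})_+$ into \emph{three} pieces at once, writing $X_i - X_{i-1} = (X_i-\mu_{i+1})+(\mu_{i+1}-\mu_i)+(\mu_i - X_{i-1})$ and applying subadditivity of $(\cdot)_+$; it then converts $\bE[(X_i-\mu_{i+1})_+]$ to $(\mu_i-\mu_{i+1})+\bE[(\mu_{i+1}-X_i)_+]$ and applies Jensen \emph{locally} in the form $(\mu_{i+1}-\mu_i)_+\le\bE[(\mu_{i+1}-X_i)_+]$, so the constant $3$ arises as $2+1$ (two copies of the shifted term plus one copy of $\bE[(\mu_i-X_{i-1})_+]$), and the leftover $(\mu_i-\mu_{i+1})$ terms telescope exactly. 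You instead split into \emph{two} pieces at $\mu_i$, use $\bE[X_i]=\mu_i$ to flip $\bE[(X_i-\mu_i)_+]$ into $\bE[(\mu_i-X_i)_+]$, shift this forward to $\mu_{i+1}$ picking up a residual $(\mu_i-\mu_{i+1})_+$, and then control the accumulated residual $\sum_i(\mu_i-\mu_{i+1})_+$ globally by observing it equals $\sum_i(\mu_i-\mu_{i-1})_+$ (total increase = total decrease of the means, using $\mu_0=\mu_{T+1}=0$) and applying Jensen term-by-term; your $3$ arises as $1+1+1$. Your total-variation identity is a nice observation the paper does not need, at the cost of one extra reindexing step; the paper's local Jensen is more mechanical. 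Both arguments are tight and yield the same constants, including the cleaner $2$-competitive bound in the equal-means case (where your index-shift argument is arguably simpler than the paper's ``remove one term'' shortcut).
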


\begin{proof}[Proof of \Cref{lem:alg-basic}]
For any $i\in[0,T]$ define $p_i\triangleq  \Pr(X_i\leq\mu_{i+1})$, $q_i\triangleq \Pr(X_i>\mu_{i+1})=1-p_i$. Let $\cA_i$ be the event that the algorithm holds the stock at time step $i$ (before the price $x_i$ is revealed). We observe that the probability that the algorithm holds the stock at time step $i\in [1,T]$ is exactly $p_{i-1}$. Indeed, if $X_{i-1}\leq \mu_{i}$ (which happens with probability $p_{i-1}$), then the algorithm does not sell the stock (if the algorithm has the stock in inventory), and buys the stock (if the algorithm does not have the stock in inventory). Similarly, if $X_{i-1}> \mu_{i}$ the algorithm sells the stock (if the algorithm has the stock in inventory) and does not buy the stock (if the algorithm does not have the stock in inventory). By the law of total expectation, 
\begin{align}
    \bE[\alg] &=\sum_{i=1}^{T}\bigg( \Pr[\cA_{i-1}] \cdot q_i \cdot \bE[X_i \mid X_i> \mu_{i+1}] - \Pr[\lnot \cA_{i-1}] \cdot p_i \cdot \bE[X_i \mid X_i \leq \mu_{i+1}]\bigg) \nonumber\\
    &=\sum_{i=1}^{T}\bigg(p_{i-1} \cdot q_i \cdot \bE[X_i \mid X_i> \mu_{i+1}] - (1-p_{i-1}) \cdot p_i \cdot \bE[X_i \mid X_i \leq \mu_{i+1}]\bigg) \nonumber\\
    & = \sum_{i=1}^{T}\bigg(p_{i-1} \cdot \bigg(q_i \cdot \bE[X_i \mid X_i> \mu_{i+1}] +  p_i \cdot \bE[X_i \mid X_i \leq \mu_{i+1}]\bigg)-  p_i \cdot \bE[X_i \mid X_i \leq \mu_{i+1}]\bigg) \nonumber\\
    & = \sum_{i=1}^{T}\bigg(p_{i-1}\cdot  \bE[X_i]- p_i \cdot E[X_i \mid X_i\leq \mu_{i+1}]\bigg) \nonumber\\
    & = \sum_{i=1}^{T}p_{i-1}\bigg(\mu_{i}- \bE[X_{i-1} \mid X_{i-1}\leq \mu_{i}]\bigg) \label{eq11} \\
    &= \sum_{i=1}^{T}\bE[(\mu_{i}-X_{i-1})_+]. \nonumber 
\end{align}
\Cref{eq11} holds since
$\bE[X_0 \mid X_0\leq \mu_{1}]=\bE[X_T \mid X_T\leq \mu_{T+1}]=0$ (as $X_0=X_{T+1}=0$).
\end{proof}
We finish with the bound on $\opt$.
\begin{proof}[Proof of \cref{lem:opt-basic}] 
We first observe that for any $i\in[1,T]$ we have,
\begin{align}
\lefteqn{
\bE[(X_{i}-X_{i-1})_+]
 = \bE \big[ \big((X_{i}-\mu_{i+1}) + (\mu_{i+1}-\mu_{i}) + (\mu_{i}-X_{i-1})\big)_+ \big]} \nonumber \\
&\leq  \bE\big[(X_{i}-\mu_{i+1})_+\big]  + (\mu_{i+1}-\mu_{i})_+ + \bE\big[(\mu_{i}-X_{i-1})_+\big] \label{ineq21}\\
&=  \big(\mu_{i}-\mu_{i+1}+\bE\big[(\mu_{i+1}-X_{i})_+\big]\big) + (\mu_{i+1}-\mu_{i})_+ + \bE\big[(\mu_{i}-X_{i-1})_+\big] \label{ineq221}\\
&\leq  \big(\mu_{i}-\mu_{i+1}+\bE\big[(\mu_{i+1}-X_{i})_+\big]\big) + \bE[(\mu_{i+1}-X_{i})_+] + \bE\big[(\mu_{i}-X_{i-1})_+\big] \label{ineq222}\\
& = \mu_{i}-\mu_{i+1}+ 2 \cdot \bE\big[(\mu_{i+1}-X_{i})_+\big] + \bE\big[(\mu_{i}-X_{i-1})_+\big] .\nonumber
\end{align}
Inequality \eqref{ineq21} follows as $(a+b+c)_+\leq a_+ +b_+ +c_+$ (and by linearity of expectation).
Equality \eqref{ineq221} follows as $\bE[(X_{i}-\mu_{i+1})_+]=\mu_{i}-\mu_{i+1} + \bE[(\mu_{i+1}-X_{i})_+]$ because 
\begin{align*}
 \bE[(X_{i}-\mu_{i+1})_+] +\mu_{i+1}-\mu_i = \bE[\max\{X_i-\mu_{i+1}, 0\} + \mu_{i+1}- X_i] =  \bE\big[(\mu_{i+1}-X_{i})_+\big] .
\end{align*}
Inequality \eqref{ineq222} follows as $(\mu_{i+1}-\mu_{i})_+ \leq \bE[(\mu_{i+1}-X_{i})_+]$ since 
\begin{align*}
(\mu_{i+1}-\mu_{i})_+ = \max\{0, \bE[\mu_{i+1}-X_{i}]\} \leq \bE[\max\{0, \mu_{i+1}-X_{i}\}] = \bE[(\mu_{i+1}-X_{i})_+] .
\end{align*} 
By \Cref{obs-opt} summing over all $T\in[1,T]$ and using that $X_0= X_{T+1}=0$, we get
\begin{align}
\bE[\opt]&= \sum_{i=1}^{T}\bE[(X_{i}-X_{i-1})_+] \nonumber\\
& \leq \sum_{i=1}^{T} \bigg(\mu_{i}-\mu_{i+1}+ 2 \cdot \bE\big[(\mu_{i+1}-X_{i})_+\big] + \bE\big[(\mu_{i}-X_{i-1})_+\big]\bigg) \nonumber \\
& = \sum_{i=1}^{T} (\mu_{i}-\mu_{i+1}) + 2 \sum_{i=1}^{T} \bE\big[(\mu_{i+1}-X_{i})_+\big] + \sum_{i=1}^{T} \bE\big[(\mu_{i}-X_{i-1})_+\big] \nonumber \\
&= (\mu_1-\mu_{T+1})+ 3\cdot \sum_{i=2}^{T}\bE\big[(\mu_{i}-X_{i-1})_+\big]+ \bE[(\mu_{1}-X_{0})_+]+ 2 \cdot\bE[(\mu_{T+1}-X_{T })_+]    \nonumber \\
&= 2 \cdot \bE[(\mu_{1}-X_{0})_+] + 3\cdot \sum_{i=2}^{T}\bE\big[(\mu_{i}-X_{i-1})_+\big]
\leq 3\cdot \sum_{i=1}^{T}\bE\big[(\mu_{i}-X_{i-1})_+\big] . \nonumber
\end{align}
Hence, $\bE[\opt] \leq 3 \cdot \sum_{i=1}^{T}\bE[(\mu_i-X_{i-1})_+]$. 

Finally, in the case that $\mu_{i}=\mu$ for all $i\in[1,T]$, observe that we can remove one of the $\bE[(\mu_{i+1}-X_{i})_+]$ terms from the inequality and obtain $\bE[\opt] \leq 2 \cdot \sum_{i=1}^{T}\bE[(\mu-X_{i-1})_+]$. This concludes the proof of \Cref{lem:opt-basic}.
\end{proof}
\subsection{A Tight Lower Bound}
In this section we prove the second part of \Cref{thm:main1},
\begin{proposition}\label{prop:lower}
    For any $\eps< 1$, there is a large enough $T$, and a sequence of distributions $X_1, \ldots, X_T$ such that for any online algorithm for the trading prophets problem it holds that
    \[\bE[\opt] \geq \left(3-O(\eps)\right)\cdot \bE[\alg]\]
    Similarly, there is a sequence of i.i.d.\ distributions such that
    \[\bE[\opt] \geq \left(2-O(\eps)\right)\cdot \bE[\alg]\]
\end{proposition}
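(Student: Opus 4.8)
The plan is to construct, for each of the two bounds, an explicit family of instances on which $\bE[\alg]$ is bounded away from $\bE[\opt]$ \emph{for every} online algorithm, by computing $\bE[\opt]$ exactly via \Cref{obs-opt} and upper bounding $\bE[\alg]$ uniformly over all strategies. For the adversarial $3-O(\eps)$ bound I would take the instance to be a chain of $\Theta(T)$ copies of a short \emph{confusion gadget}, together with $O(1)$‑size boundary pieces whose contribution is absorbed into the additive constant. The gadget is a randomized variant of the Correa et al.\ impossibility example from the introduction --- a few consecutive steps in which the price is, with small probability, very large, and otherwise small --- with the expectations $\mu_i$ of consecutive steps arranged into a small ``bump'' (increasing and then decreasing). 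The increasing part is precisely what activates the $(\mu_{i+1}-\mu_i)_+$ terms that separate ratio $3$ from ratio $2$ in the proof of \Cref{lem:opt-basic}; the smallness of the large values keeps each step's realization below the next step's mean, so the bound $(\mu_{i+1}-\mu_i)_+\le \bE[(\mu_{i+1}-X_i)_+]$ used there is nearly tight; and the signs of $(X_i-\mu_{i+1})$, $(\mu_{i+1}-\mu_i)$ and $(\mu_i-X_{i-1})$ are arranged so that the triangle inequality $(a+b+c)_+\le a_++b_++c_+$ is nearly tight as well. Then $\bE[\opt]$ is a sum of per‑gadget contributions, each of size $(3-O(\eps))$ in the appropriate normalization.

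The substantive step is the uniform upper bound on $\bE[\alg]$. Here I would fix an arbitrary online algorithm, write $V_i(\cdot)$ for its optimal continuation value from each of the two inventory states, and run a backward induction over a single gadget. The point of the construction is that on the gadget these value functions are essentially pinned down: the only genuine degree of freedom is a single Correa‑type indifferent sell/hold choice, and in either case the algorithm's expected realized profit per gadget is at most a $\left(\frac13+O(\eps)\right)$‑fraction of the gadget's contribution to $\bE[\opt]$. Intuitively, the prophet cashes in on all three ``up moves'' inside the gadget, whereas the algorithm --- forced to carry inventory through some of them and out of the market when the rare spike occurs --- realizes in expectation only one move's worth. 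Chaining the telescoping value inequalities along the $\Theta(T)$ gadgets, and noting that the two end gadgets cost only an additive $O(1)$, yields $\bE[\opt]\ge(3-O(\eps))\bE[\alg]$.

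For the i.i.d.\ $2-O(\eps)$ bound I would run the same program with a single distribution $D$, made to have continuous cdf via the perturbation of \Cref{sec:pre}, supported essentially on three points $\{0,M,V\}$ with $V$ large and the two nontrivial probabilities balanced so that the median‑threshold algorithm of \cite{CDHOS23} (which we already know is the right online benchmark in the i.i.d.\ model) extracts exactly half of $\bE[\opt]$. The simplification in the i.i.d.\ case is that stationarity lets one assume the optimal online policy is a threshold policy, so the uniform bound on $\bE[\alg]$ reduces to a one‑parameter optimization; the factor $2$ then emerges as ``the prophet realizes both the steady appreciation move and the recovery‑after‑a‑dip move, the algorithm only one of them.''

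The main obstacle is the apparent conflict inside the gadget design: pushing the ratio to $3$ forces the means to vary, but that same variation introduces slack into exactly the inequalities one needs tight. The resolution I would pursue is that each near‑tightness only has to hold up to $O(\eps)$, with the unavoidable $\Theta(1)$ slack confined to a single step of each gadget and hence charged against the $1/\Theta(\text{gadget length})$ fraction of the profit it represents. Making all of these hold simultaneously for one choice of parameters, and --- the delicate part --- verifying that the backward induction really does dominate \emph{every} online strategy rather than just \Cref{alg:buylsellh}, is where the bulk of the work lies.
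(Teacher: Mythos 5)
Your proposal has the right high‑level shape---a periodic instance built from a small ``gadget,'' with $\bE[\opt]$ computed exactly via \Cref{obs-opt} and a uniform bound on $\bE[\alg]$ over all online strategies, the ends absorbed into the additive constant---but it stops short of a proof in the place you yourself flag as the hard part, and the paper handles that part much more cleanly than the route you suggest.

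The key ingredient you are missing is \Cref{lem:loweralg}: for \emph{any} sequence of distributions, the expected profit of \emph{any} online algorithm is at most $\sum_{i=1}^{T}\bE[(\mu_{i}-X_{i-1})_+]$. The proof is short and instance‑independent. One first observes that any buy/sell schedule with $b_1\le s_1\le b_2\le\cdots\le s_k$ can be rewritten (by telescoping) as buying at every step of $[b_i,s_i-1]$ and selling one step later, so one may assume the algorithm always sells exactly one step after buying. Since $X_i$ is independent of the past, the expected profit is $\sum_i p_{i-1}(\bE[X_i]-x_{i-1})$ where $p_{i-1}$ is the probability of buying at step $i-1$ given $x_{i-1}$; this is pointwise maximized by buying iff $x_{i-1}\le \mu_i$, giving the claimed bound, which matches \Cref{lem:alg-basic} exactly. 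With this in hand, the lower bound requires no backward induction or value‑function analysis at all: you only need one concrete instance on which $\sum_i\bE[(X_i-X_{i-1})_+]\ge(3-O(\eps))\sum_i\bE[(\mu_i-X_{i-1})_+]$. The paper's instance is a period‑$2$ sequence with $X_{2i-1}\in\{1/\eps,0\}$ (probabilities $1-\eps,\eps$) and $X_{2i}\in\{1/\eps-1,\,2/\eps-1\}$ (probabilities $1-\eps,\eps$), for which a direct calculation gives $\bE[(\mu_{2i}-X_{2i-1})_+]=1$, $\bE[(\mu_{2i+1}-X_{2i})_+]=0$, $\bE[(X_{2i}-X_{2i-1})_+]\ge 2-4\eps$, and $\bE[(X_{2i-1}-X_{2i})_+]\ge 1-2\eps$, so each period contributes about $3$ to $\bE[\opt]$ and about $1$ to the upper bound on $\bE[\alg]$. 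The i.i.d.\ case uses a single three‑point distribution on $\{0,\tfrac12,1\}$ with probabilities $\eps/2,\,1-\eps,\,\eps/2$ (note: the large value is $1$, not ``$V$ large'' as you suggest), and the same per‑period comparison.

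Your proposed alternative---fixing an arbitrary algorithm, introducing continuation values $V_i(\cdot)$, and running a backward induction over a gadget to show it ``dominates every online strategy''---could in principle be made to work, but it re‑derives, in a more laborious and instance‑specific way, exactly what \Cref{lem:loweralg} gives for free. As written it is not a proof: no gadget is actually specified, no value recursion is written down, and the crucial uniformity over all strategies is asserted rather than established. The guiding intuition you give for where the factors $3$ and $2$ come from (the prophet cashing in on all up‑moves, the algorithm realizing in expectation one move's worth; the near‑tightness of $(a+b+c)_+\le a_++b_++c_+$ and of $(\mu_{i+1}-\mu_i)_+\le\bE[(\mu_{i+1}-X_i)_+]$) is correct and does match the paper's calculations, but to close the gap you should (i) prove the universal upper bound on $\bE[\alg]$ as in \Cref{lem:loweralg}, and (ii) write down an explicit distribution sequence and do the four expectations.
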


We first prove the following lemma. 
\begin{lemma}\label{lem:loweralg}
The expected profit of any online algorithm for the trading prophets problem is at most $\sum_{i=1}^{T}\bE[(\mu_{i}-X_{i-1})_+]$.  
\end{lemma}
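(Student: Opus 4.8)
The plan is to prove the matching upper bound to \Cref{lem:alg-basic}: that \emph{no} online algorithm can exceed $\sum_{i=1}^{T}\bE[(\mu_i-X_{i-1})_+]$, so that \Cref{alg:buylsellh} is optimal online. Encode an arbitrary (possibly randomized) online algorithm by its holdings: let $h_i\in\{0,1\}$ indicate whether the algorithm holds the stock immediately after processing time step $i$, for $i\in[0,T]$, and set $h_{-1}:=0$ and $h_{T+1}:=0$ (the latter without loss of generality, since any share still held at time $T$ may be sold at time $T+1$ for $X_{T+1}=0$). The only cash flow at step $i$ is $(h_{i-1}-h_i)X_i$ --- a purchase when $h_{i-1}=0,\,h_i=1$ and a sale when $h_{i-1}=1,\,h_i=0$ --- so $\alg=\sum_{i=0}^{T+1}(h_{i-1}-h_i)X_i$, and the $i=0$ and $i=T+1$ terms vanish because $X_0=X_{T+1}=0$. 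All these sums are integrable since $T$ is finite and each $\mu_i$ is finite, so expectations may be taken term by term.

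The key step is a summation-by-parts (Abel) rewriting of this telescoping sum: reindexing $\sum_i h_{i-1}X_i=\sum_i h_iX_{i+1}$ and again cancelling the boundary contributions using $X_0=X_{T+1}=0$ gives the clean identity $\alg=\sum_{i=0}^{T}h_i\,(X_{i+1}-X_i)$. This is the crux, because it writes the profit as a sum of products of the \emph{online} decision $h_i$ --- a function of $X_0,\dots,X_i$ and the algorithm's internal coins only --- with the \emph{future} increment $X_{i+1}-X_i$.

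Now take expectations and condition on the information available at step $i$. Since $X_{i+1}$ is independent of $(X_0,\dots,X_i)$ and of the algorithm's randomness, whereas $h_i$ and $X_i$ are measurable with respect to it, the tower property yields $\bE[h_i(X_{i+1}-X_i)]=\bE[h_i(\mu_{i+1}-X_i)]$. As $h_i\in\{0,1\}$ we have $h_i(\mu_{i+1}-X_i)\le(\mu_{i+1}-X_i)_+$ pointwise, hence $\bE[\alg]\le\sum_{i=0}^{T}\bE[(\mu_{i+1}-X_i)_+]=\sum_{i=1}^{T+1}\bE[(\mu_i-X_{i-1})_+]$. The $i=T+1$ term equals $\bE[(\mu_{T+1}-X_T)_+]=0$ since $\mu_{T+1}=\bE[X_{T+1}]=0$ and $X_T\ge 0$, which gives exactly $\bE[\alg]\le\sum_{i=1}^{T}\bE[(\mu_i-X_{i-1})_+]$, as claimed.

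Once the representation $\alg=\sum_i h_i(X_{i+1}-X_i)$ is in hand there is no real obstacle; the care is entirely in the bookkeeping --- keeping the index ranges and the $X_0,X_{T+1}=0$ boundary conventions straight, and making explicit that ``online'' is exactly the statement that $h_i$ is independent of $X_{i+1}$, which is what licenses replacing $X_{i+1}$ by $\mu_{i+1}$ (and which also absorbs the randomized case, provided the algorithm's coins are independent of the prices). As a sanity check, this bound matches \Cref{lem:alg-basic} term for term: \Cref{alg:buylsellh} maintains $h_i=\mathbf{1}[X_i\le\mu_{i+1}]$, the pointwise maximizer of $h_i(\mu_{i+1}-X_i)$ over $h_i\in\{0,1\}$. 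Combined with this lemma, proving \Cref{prop:lower} then reduces to exhibiting instances on which $\bE[\opt]$ is close to $3$ (resp.\ $2$) times $\sum_{i}\bE[(\mu_i-X_{i-1})_+]$.
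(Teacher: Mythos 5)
Your proof is correct and follows essentially the same route as the paper's: both decompose the profit into one-step increments --- the paper by rewriting the algorithm so that each share is sold one step after purchase, you by the Abel-summation identity $\alg = \sum_{i=0}^{T} h_i(X_{i+1}-X_i)$ --- and then use independence of the holding decision $h_i$ from $X_{i+1}$ to bound each expected one-step profit by $\bE[(\mu_{i+1}-X_i)_+]$. Your version is a bit more explicit about measurability, the randomized case, and the boundary conventions $X_0=X_{T+1}=0$, but the key idea is identical.
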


Together with \Cref{lem:alg-basic}, this implies that the expected profit of \Cref{alg:buylsellh} is at least as large as the expected profit of any other online algorithm.
We note that the proof of \Cref{lem:loweralg} generalizes that of \cite[Lemma 2.2]{RCV25} to the adversarial model. 

\begin{proof}[Proof of \Cref{lem:loweralg}]
Consider an algorithm that buys the stock at times $b_1, \ldots, b_k$ and sells the stock at times $s_1, \ldots, s_k$ such that $b_1 \leq s_1 \leq b_2 \leq s_2 \leq 
\ldots \leq b_k \leq s_k$. We can equivalently rewrite the algorithm as one that, for every $i \in [k]$, buys at every time step in the interval $[b_i, s_i-1]$ and sells at every time step in the interval $[b_i+1, s_i]$. Hence, we may assume that every algorithm always sells one time step after buying, no matter the selling price. The price at time $i+1$ is independent of the price at time $i$ and of the decision of the algorithm at time $i$. Therefore, if the realizations of the prices are $x_1, \ldots, x_T$, and $p_i$ is the probability that the algorithm buys the stock at time $i$ given the realization $x_i$, then the algorithm's profit is
\[\bE[\alg] = \sum_{i=1}^T p_{i-1} (\bE[X_i] - x_{i-1}).\]
This expression is maximized by the  algorithm that buys at step $i$ if and only if $x_{i} \leq \bE[X_{i+1}]$, and in this case the expected profit is $\sum_{i=1}^{T}\bE[(\mu_{i}-X_{i-1})_+]$.
\end{proof}
We are now ready to prove the proposition, which is a slight adaptation of the examples in \cite[Propositions $1$ and $2$]{CDHOS23}.

\begin{proof}[Proof of \Cref{prop:lower}]
For the first statement, fix $\eps <1$ and consider the following sequence of distributions
\begin{align*}
   X_0& =0 \\
    X_{2i-1} & = \begin{cases}\nicefrac{1}{\eps} \phantom{-11} & \text{w.p. } (1-\eps),\\
    0 & \text{w.p. }\eps.\end{cases} & \forall i\in [T/2]\\
    X_{2i} & = \begin{cases}\nicefrac{1}{\eps}-1 & \text{w.p. }(1-\eps),\\
    \nicefrac{2}{\eps}-1  & \text{w.p. }\eps.\end{cases} &
\end{align*}
These random variables are non-negative and have expectations $\mu_{2i-1}=\nicefrac{1}{\eps}-1$ and $\mu_{2i}=\nicefrac{1}{\eps}$. Simple calculations yield
\begin{eqnarray*}
\bE[(\mu_{2i+1}-X_{2i})_+] & = & \bE[(\nicefrac{1}{\eps}-1-X_{2i})_+] =0,\\
    \bE[(\mu_{2i}-X_{2i-1})_+] & = & \bE[(\nicefrac{1}{\eps}-X_{2i-1})_+] = \eps \cdot \nicefrac{1}{\eps} = 1,\\
    \bE[(X_{2i}-X_{2i-1})_+] & =  & \eps \cdot (1-\eps) (1/\eps -1) + \eps \cdot \eps (2/\eps -1) + (1-\eps) \cdot \eps (1/\eps-1) \\ 
    & = & 2(1-\eps)^2 +\eps (2-\eps) \geq 2-4\eps,\\
    \bE[(X_{2i-1}-X_{2i})_+] & = & (1-\eps) \cdot (1-\eps) \cdot 1 \geq 1-2\eps,\\
    \bE[(X_{1}-X_{0})_+] & = & \bE[(\mu_{1}-X_{0})_+] = \nicefrac{1}{\eps}-1. 
\end{eqnarray*}
By \Cref{obs-opt} and \Cref{lem:loweralg},  for small enough $\eps>0$, 
\begin{align*}
    \bE[\opt]& = \sum_{i=1}^{T}\bE[(X_{i}-X_{i-1})_+] \geq  \frac{T}{2}\left(3- 6\eps\right) + (\nicefrac{1}{\eps}-1) - (1-2\eps) \geq \frac{T}{2}\left(3- 6\eps\right), \\
    \bE[\alg] &\leq \sum_{i=1}^{T}\bE[(\mu_{i}-X_{i-1})_+]  = T/2 + \nicefrac{1}{\eps}-1 \leq T/2 + \nicefrac{1}{\eps}.
\end{align*}
Taking $T$ sufficiently large concludes the proof of the first statement.

For the second statement, consider the following distribution (as in \cite{CDHOS23}): $X_0=0$ and  
\begin{align*}
    X_{i} & = \begin{cases}1& \text{w.p. }\frac{\eps}{2},\\
    \frac{1}{2}& \text{w.p. }1-\eps,\\
    0 & \text{w.p. }\frac{\eps}{2}.\end{cases} & \forall i\in [T]
\end{align*}
This time $\mu=\frac{1}{2}$ and 
\begin{align*}
      \bE[\opt]& = \sum_{i=1}^{T}\bE[(X_{i}-X_{i-1})_+] = \frac{1}{2} + T \cdot \left(\frac{\eps}{2}(1-\eps)\cdot\frac{1}{2} + \left(\frac{\eps}{2}\right)^2\cdot 1 + (1-\eps)\cdot\frac{\eps}{2}\cdot \frac{1}{2}\right) \geq T\cdot \left(\frac{\eps}{2} -\frac{1}{4}\eps^2\right), \\
    \bE[\alg] &\leq \sum_{i=1}^{T}\bE[(\mu-X_{i-1})_+]  = \frac{1}{2} + T \cdot \frac{\eps}{4}.  
\end{align*}
Once again taking $T$ sufficiently large concludes the proof.
\end{proof}

\section{Trading Prophets with Transaction Fees in the IID Model}\label{sec:transactions}
In this section we analyze \Cref{alg:trnsctcost}, proving \Cref{thm:main2}. First, define
\begin{itemize}
    \item $p\triangleq \Pr[X\leq z_L]=Pr[X\geq z_H]$,
    \item $v\triangleq z_H(1-\epspi)-\epssig = z_L(1+\epspi)+\epssig$,
    \item $v_{L}\triangleq \bE[X \mid X\leq z_L]$,
    \item $v_{H}\triangleq\bE[X \mid X\geq z_H]$.
\end{itemize}
Note that $z_L \le v \le z_H$. We prove the theorem by upper bounding \opt and lower bounding \alg.

\begin{lemma}\label{lem:alg-trnsct}
    \Cref{alg:trnsctcost} satisfies $\bE[\alg]  \geq \frac{1}{2} \cdot pT \cdot \left[v_H(1-\epspi)- v_L(1+\epspi)-2\epssig\right]$.
\end{lemma}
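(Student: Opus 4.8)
The plan is to write $\bE[\alg]$ as a sum of per-step contributions, control the probability that the algorithm holds the stock at each step via a two-state Markov chain, and combine the two. Because the prices are i.i.d.\ and each decision compares the current price to the fixed thresholds $z_L\le z_H$, the holding status of the algorithm evolves as a Markov chain: from holding it moves to not-holding exactly when $X_i\ge z_H$, and from not-holding it moves to holding exactly when $X_i\le z_L$; crucially, both transition probabilities equal $p$ by the defining property $\Pr[X\ge z_H]=\Pr[X\le z_L]$. Writing $h_i$ for the probability that the algorithm holds the stock at the start of step $i$, this gives $h_{i+1}=(1-p)h_i+p(1-h_i)$, equivalently $h_{i+1}-\tfrac12=(1-2p)(h_i-\tfrac12)$. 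Since the algorithm is endowed with the initial share, $h_1=1$; and since $2p=\Pr[X\le z_L]+\Pr[X\ge z_H]\le 1$ we have $0\le 1-2p\le 1$, so $h_i-\tfrac12=\tfrac12(1-2p)^{i-1}\ge 0$ and in particular $\sum_{i=1}^{T}h_i\ge T/2$.

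Next I would compute $\bE[\alg]$ by linearity over the trading steps $i=1,\dots,T$. The event that the algorithm holds the stock at the start of step $i$ is determined by $X_1,\dots,X_{i-1}$, hence independent of $X_i$; so the expected sale revenue collected at step $i$ equals $h_i\cdot\Pr[X_i\ge z_H]\cdot\bE[(1-\epspi)X_i-\epssig\mid X_i\ge z_H]=h_i\,p\,A$ with $A\triangleq(1-\epspi)v_H-\epssig$, and likewise the expected purchase cost at step $i$ equals $(1-h_i)\,p\,B$ with $B\triangleq(1+\epspi)v_L+\epssig$. The initial share costs nothing, and we conservatively assign no value to a share still held after step $T$; with these conventions
\[
\bE[\alg]=\sum_{i=1}^{T}p\bigl(h_iA-(1-h_i)B\bigr)=p\Bigl((A+B)\textstyle\sum_{i=1}^{T}h_i-TB\Bigr).
\]

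Finally I would assemble the pieces. Both $A$ and $B$ are nonnegative: $B\ge 0$ is immediate, and $A=(1-\epspi)v_H-\epssig\ge(1-\epspi)z_H-\epssig=v\ge 0$ since $v_H=\bE[X\mid X\ge z_H]\ge z_H$. Hence $A+B\ge 0$, and substituting $\sum_{i=1}^{T}h_i\ge T/2$ into the identity above,
\[
\bE[\alg]\ge p\Bigl((A+B)\tfrac{T}{2}-TB\Bigr)=\tfrac12 pT(A-B)=\tfrac12 pT\bigl[v_H(1-\epspi)-v_L(1+\epspi)-2\epssig\bigr],
\]
which is exactly the claimed bound.

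The step I expect to need the most care is the boundary bookkeeping: pinning down $h_1=1$ (this is the one place the initial share --- the premise of the whole model --- is used; without it one only gets $\sum_{i=1}^T h_i\le T/2$ and the bound fails), checking that the $i=0$ step with $X_0=0<z_H$ is a harmless no-op, and justifying that neither the value of a share left over after step $T$ nor the zero cost of the initial share needs to appear in the accounting. The other essential use of the i.i.d.\ assumption --- the independence of the holding indicator at step $i$ from $X_i$, which drives both the Markov recurrence and the per-step expectation --- should also be spelled out.
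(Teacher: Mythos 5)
Your proof is correct, and it reaches the same bound via a genuinely different route from the paper. The paper's argument is a simple pathwise counting claim: letting $S$ and $B$ denote the total numbers of sells and buys, it observes that because the algorithm starts holding a share, $B \le S$ on every sample path, and combining this with $\bE[B+S]=pT$ gives $\bE[S]\ge pT/2$ and $\bE[B]\le pT/2$; then $\bE[\alg]=\bE[S]\cdot A-\bE[B]\cdot B$ yields the bound directly (using $A,B\ge 0$). You instead model the holding indicator as a two-state Markov chain with symmetric transition probability $p$, solve the recurrence $h_{i+1}-\tfrac12=(1-2p)(h_i-\tfrac12)$ with $h_1=1$, and conclude $h_i\ge\tfrac12$ for every step $i$. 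Your pointwise statement $h_i\ge\tfrac12$ is strictly stronger than the paper's aggregate $\bE[S]\ge pT/2$ (which is equivalent to $\sum_i h_i\ge T/2$), and it gives finer information --- the holding probability decays geometrically to $\tfrac12$ --- that the paper does not need but which could be useful for a sharper or time-dependent analysis. The paper's pathwise counting is shorter and avoids solving a recurrence. One small bonus of your algebraic regrouping $\bE[\alg]=p\bigl((A+B)\sum_i h_i - TB\bigr)$ is that you only need $A+B\ge 0$ to conclude, whereas the paper's line-by-line substitution of $\bE[S]\ge pT/2$ and $\bE[B]\le pT/2$ tacitly uses $A\ge 0$ and $B\ge 0$ separately (both hold, and you verify $A\ge 0$ carefully via $v_H\ge z_H$, which the paper leaves implicit). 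Your boundary bookkeeping --- treating the initial share as free, $X_0=0<z_H$ as a no-op given the starting inventory, and assigning zero value to a leftover share --- matches the paper's conventions.
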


\begin{lemma}\label{lem:opt-trnsct}
It holds that 
    $\bE[\opt] \leq v+ pT \cdot \left[v_H(1-\epspi)- v_L(1+\epspi) -2\epssig\right]$.
\end{lemma}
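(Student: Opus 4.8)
The plan is to upper bound the prophet's profit on every realization of the prices, exploiting the single-unit constraint together with the two defining equations of the thresholds. Because the prophet holds at most one unit at a time and starts holding the unit it is granted --- which, since $X_0=0$, we model as a cost-free purchase at time $0$ --- its whole activity decomposes into matched buy--sell pairs: there are times $0=b_1\le s_1\le b_2\le s_2\le\cdots\le b_k\le s_k$ at which it buys (the $b_j$) and sells (the $s_j$). We may assume all these times are distinct and each matched pair $[b_j,s_j]$ is individually profitable, since deleting an unprofitable or unmatched trade only raises the total and preserves the alternating structure. Hence
\[\opt = \big[(1-\epspi)X_{s_1}-\epssig\big] + \sum_{j=2}^{k}\big[(1-\epspi)X_{s_j}-(1+\epspi)X_{b_j}-2\epssig\big],\]
the first bracket being special because the initial unit was free.

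The key step is an algebraic rewriting. From $z_H(1-\epspi)-\epssig=v=z_L(1+\epspi)+\epssig$ we get $z_H(1-\epspi)-z_L(1+\epspi)=2\epssig$, so for $j\ge 2$
\[(1-\epspi)X_{s_j}-(1+\epspi)X_{b_j}-2\epssig = (1-\epspi)(X_{s_j}-z_H)+(1+\epspi)(z_L-X_{b_j}),\]
while $(1-\epspi)X_{s_1}-\epssig = (1-\epspi)(X_{s_1}-z_H)+v$. Replacing each signed summand by its positive part only increases the bound, and since the indices $b_2,\dots,b_k,s_1,\dots,s_k$ are distinct time steps in $[1,T+1]$ --- with a sale at $T+1$ contributing nothing as $X_{T+1}=0\le z_H$ --- both sums may be enlarged to range over all of $[1,T]$, giving
\[\opt \;\le\; v + \sum_{i=1}^{T}(1-\epspi)(X_i-z_H)_+ + \sum_{i=1}^{T}(1+\epspi)(z_L-X_i)_+.\]

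It remains to take expectations over the i.i.d.\ prices. Using $\bE[(X-z_H)_+]=p(v_H-z_H)$ and $\bE[(z_L-X)_+]=p(z_L-v_L)$,
\[\bE[\opt]\;\le\; v + pT\big[(1-\epspi)(v_H-z_H)+(1+\epspi)(z_L-v_L)\big],\]
and substituting $(1-\epspi)z_H=v+\epssig$ and $(1+\epspi)z_L=v-\epssig$ collapses the bracket to precisely $v_H(1-\epspi)-v_L(1+\epspi)-2\epssig$, which is the claim.

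The work here is bookkeeping rather than a real obstacle. The points needing care are: reducing an arbitrary prophet strategy to a clean sequence of profitable matched pairs preceded by the free initial unit, so that the boundary injects exactly the additive term $v$ and no more; verifying that the endpoint times $0$ and $T+1$ contribute nothing, so that restricting the sums to $i\in[1,T]$ is valid; and handling the degenerate case $p=0$ (where the bound becomes $\bE[\opt]\le v$ and the thresholds may be chosen with $v\ge 0$). Everything else follows from the short identity above together with the defining relations among $z_L$, $z_H$, and $v$.
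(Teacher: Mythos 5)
Your proof is correct and arrives at the identical bound, but it takes a genuinely different route from the paper's, so a comparison is in order. Both arguments ultimately rest on the same algebra: the defining relations give $z_H(1-\epspi)-z_L(1+\epspi)=2\epssig$, which lets one re-center the profit of a matched buy--sell pair around $z_L$ and $z_H$. The paper argues by \emph{relaxation}: it grants the prophet the extra option to buy or sell at the fixed price $v$ within any single round, shows the relaxed prophet can without loss split every buy--sell pair into two independent same-round trades (buy now and immediately resell at $v$; or buy at $v$ just before and immediately sell now), and then observes that the relaxed optimum is a memoryless per-round threshold rule whose expected value is easy to compute. You instead go directly: decompose $\opt$ into matched pairs beginning from the free unit at time $0$, re-center each term algebraically via the identity above, pass to positive parts, and enlarge the sums over all of $[1,T]$, using $X_{T+1}=0$ to absorb the boundary. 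Both arrive at the same per-round expectation $p\bigl[(1-\epspi)(v_H-z_H)+(1+\epspi)(z_L-v_L)\bigr]$. Your version is somewhat more self-contained, since it never needs to characterize the optimal strategy of an augmented prophet; the paper's version makes the provenance of the bound more transparent, as it is literally the value of a simple memoryless policy in a slightly more powerful model. One small remark: the step ``replace each signed summand by its positive part'' already upper-bounds arbitrary (possibly unprofitable) pairs, so the preliminary reduction to individually profitable pairs, while harmless and clarifying, is not logically necessary for the inequality.
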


Note, this implies that \Cref{alg:trnsctcost} is $2$-competitive with additive cost $v$. We start by bounding \opt.
\begin{proof}[Proof of \cref{lem:opt-trnsct}]
To upper bound $\bE[\opt]$, we begin by giving extra power to the optimal solution, which can only increase $\bE[\opt]$. In addition to the option of buying the stock at price $X_i(1+\epspi)+\epssig$ and selling the stock at a price of $X_i(1-\epspi)-\epssig$, we allow the optimal offline solution to (a) immediately sell stock purchased in the same round (as well as the  initial inventory stock) for a price of $v$, and (b) to buy stock (just before the current round) at a price of $v$.

 With this change, the strategy of the optimal solution that sees all realizations $X_i$ is simple. At any time $i=1, \ldots, T$, if $X_i\leq z_L$, buy the stock and sell it immediately at a price of $v$. At any time when $X_i \geq z_H$, buy the stock for a price of $v$ and sell it immediately. 

To see that this is the optimal strategy, first observe that we can assume any algorithm with this extra power only ever buys stock that it sells immediately in the same round for price $v$, and only ever sells stock that it bought just prior to the round at price $v$. To this end, consider any solution that buys in round $i$ and sells in round $j > i$. Then we can modify this solution to sell the stock in round $i$ for a price of $v$, and buy the stock just before round $j$ at a price of $v$, without changing the overall profit. 

Next, observe that for round $i$, the profit obtainable from buying and then immediately selling at price $v$ is $v- (X_i(1+\epspi) +\epssig)$, which is nonnegative if and only if $X_i \leq z_L$. Hence, the solution should perform this transaction if and only if $X_i \leq z_L$, otherwise the overall profit can be improved by following this rule. Similarly, the profit obtainable from buying at price $v$ and then immediately selling is $X_i(1-\epspi)- \epssig - v$, which is nonnegative if and only if $X_i \geq z_H$. Once again, the solution should perform this transaction if and only if $X_i \geq z_H$, otherwise profit can be improved by following the rule.

To conclude, assuming the optimal solution uses the strategy outlined above, the total expected profit is
\begin{align*}
\bE[\opt] & \leq  v+  \sum_{i=1}^{T}  \Pr[X \leq z_L] \cdot \bE[v - (X_i(1+\epspi) +\epssig) \mid X \leq z_L]   \\
      & \quad + \Pr[X \geq z_H] \cdot \bE[X_i(1-\epspi)-\epssig - v \mid X \geq z_H]  \\
&=  v+  \sum_{i=1}^{T} p \cdot [v - (v_L(1+\epspi) +\epssig) ]+ p \cdot [v_H(1-\epspi)-\epssig - v ]\\
&= v+ pT \cdot \left[v_H(1-\epspi)- v_L(1+\epspi) -2\epssig\right],
\end{align*}
where the extra $v$ comes from selling the initial stock. 
\end{proof}

We end with the bound on \alg's performance.
\begin{proof}[Proof of \cref{lem:alg-trnsct}]
Let $B_i$ and $S_i$ be indicators that the algorithm buys and sells the stock respectively at time step $i$.
As $X_i$ is independent of the event that the algorithm holds the stock at time step $i$, we have $\bE[X_i \mid S_i=1]= \bE[X_i \mid X_i\geq z_H]=v_H$ and $\bE[X_i \mid B_i=1]= \bE[X_i \mid X_i\leq z_L]=v_L$. 
Moreover, for any time step $i\in[T]$, we have that $\bE[B_i+S_i]=p$. This is true because conditioned on holding the stock, the algorithm sells with probability $p$, and conditioned on not holding the stock, the algorithm buys with probability $p$. Defining $B=\sum_{i=1}^{T}B_i$ to be the total number of purchases and $S=\sum_{i=1}^{T}S_i$ to be the total number of sales, we get $\bE[B+S]=pT$. Since the algorithm initially owns the stock, we have $S-1 \le B\leq S$, which implies that $\bE[S]\geq \frac{pT}{2}$ and $\bE[B] \leq \frac{pT}{2}$.  Hence, 
\begin{align*}
    \bE[\alg] & = \sum_{i=1}^{T} \Pr[S_i=1]\cdot \bE[X_i(1-\epspi)-\epssig \mid  S_i=1] - \Pr[B_i=1]\cdot \bE[X_i(1+\epspi)+\epssig \mid  B_i=1] \\
    & = \sum_{i=1}^{T}\bE[S_i]\cdot \left[v_H(1-\epspi)-\epssig\right] - \bE[B_i]\cdot \left[v_L(1+\epspi)+\epssig\right] \\ &= 
    \bE[S]\cdot \left[v_H(1-\epspi)-\epssig\right] - \bE[B]\cdot \left[v_L(1+\epspi)+\epssig\right] \\
    & \geq \frac{pT}{2} \cdot \left[v_H(1-\epspi)- v_L(1+\epspi)-2\epssig\right]. \qedhere
\end{align*}
 
\end{proof}

\section{Conclusion}

In this work, we show that allowing an additive term in the competitive ratio, or simply permitting the algorithm to start with an initial stock, bypasses the impossibility theorem of prior work and enables competitive algorithms for the trading prophets problem. In particular, we present an optimal $3$-competitive algorithm for this original problem as well as a $2$-competitive algorithm for a harder version with transaction costs but i.i.d.\ prices.

This raises a natural open question: can we design a competitive algorithm for the trading prophets problem with transaction costs in the adversarial (i.e. non-i.i.d.) setting? Another interesting direction is to generalize our results to the multiple commodity setting studied by \cite{RCV25}, or to investigate intermediate versions that bridge between ours and the fully adversarial (and significantly harder) model of \cite{azar2025competitivebundletrading}.

\bibliographystyle{alpha}
\bibliography{bib}

\newcommand{\etalchar}[1]{$^{#1}$}
\begin{thebibliography}{CGLW21}

\bibitem[ABLV25]{azar2025competitivebundletrading}
Yossi Azar, Niv Buchbinder, Roie Levin, and Or~Vardi.
\newblock Competitive bundle trading.
\newblock {\em arXiv, https://arxiv.org/abs/2507.23047}, 2025.

\bibitem[AKW14]{azar14prophet}
Pablo~Daniel Azar, Robert Kleinberg, and S.~Matthew Weinberg.
\newblock Prophet inequalities with limited information.
\newblock In {\em Proceedings of the Twenty-Fifth Annual {ACM-SIAM} Symposium
  on Discrete Algorithms, {SODA} 2014, Portland, Oregon, USA}, pages
  1358--1377, 2014.

\bibitem[AKW19]{azar-first-sspi}
Pablo~Daniel Azar, Robert Kleinberg, and S.~Matthew Weinberg.
\newblock Prior independent mechanisms via prophet inequalities with limited
  information.
\newblock {\em Games Econ. Behav.}, 118:511--532, 2019.

\bibitem[CCD{\etalchar{+}}23]{CDHOS23}
Jos{\'{e}} Correa, Andr{\'{e}}s Cristi, Paul Duetting, MohammadTaghi
  Hajiaghayi, Jan Olkowski, and Kevin Schewior.
\newblock Trading prophets.
\newblock In {\em Proceedings of the 24th {ACM} Conference on Economics and
  Computation, {EC} 2023, London, United Kingdom}, pages 490--510. {ACM}, 2023.

\bibitem[CDF{\etalchar{+}}22]{CaramanisDFFLLP22}
Constantine Caramanis, Paul D{\"{u}}tting, Matthew Faw, Federico Fusco, Philip
  Lazos, Stefano Leonardi, Orestis Papadigenopoulos, Emmanouil Pountourakis,
  and Rebecca Reiffenh{\"{a}}user.
\newblock Single-sample prophet inequalities via greedy-ordered selection.
\newblock In {\em Proceedings of the 2022 {ACM-SIAM} Symposium on Discrete
  Algorithms, {SODA} 2022, Virtual Conference / Alexandria, VA, USA}, pages
  1298--1325. {SIAM}, 2022.

\bibitem[CGLW21]{DBLP:conf/wine/CaragiannisGLW21}
Ioannis Caragiannis, Nick Gravin, Pinyan Lu, and Zihe Wang.
\newblock Relaxing the independence assumption in sequential posted pricing,
  prophet inequality, and random bipartite matching.
\newblock In {\em Web and Internet Economics - 17th International Conference,
  {WINE} 2021, Potsdam, Germany}, volume 13112, pages 131--148, 2021.

\bibitem[CHMS10]{DBLP:conf/stoc/ChawlaHMS10}
Shuchi Chawla, Jason~D. Hartline, David~L. Malec, and Balasubramanian Sivan.
\newblock Multi-parameter mechanism design and sequential posted pricing.
\newblock In {\em Proceedings of the 42nd {ACM} Symposium on Theory of
  Computing, {STOC} 2010, Cambridge, Massachusetts, USA}, pages 311--320, 2010.

\bibitem[CL24]{DBLP:journals/tcs/ChekuriL24}
Chandra Chekuri and Vasilis Livanos.
\newblock On submodular prophet inequalities and correlation gap.
\newblock {\em Theor. Comput. Sci.}, 1019:114814, 2024.

\bibitem[CSZ21]{CorreaSZ21}
Jos{\'{e}} Correa, Raimundo Saona, and Bruno Ziliotto.
\newblock Prophet secretary through blind strategies.
\newblock {\em Math. Program.}, 190(1):483--521, 2021.

\bibitem[DKP24]{DBLP:conf/stoc/DughmiKP24}
Shaddin Dughmi, Yusuf~Hakan Kalayci, and Neel Patel.
\newblock Limitations of stochastic selection problems with pairwise
  independent priors.
\newblock In {\em Proceedings of the 56th Annual {ACM} Symposium on Theory of
  Computing, {STOC} 2024, Vancouver, BC, Canada}, pages 479--490. {ACM}, 2024.

\bibitem[EHKS24]{EhsaniHKS24}
Soheil Ehsani, MohammadTaghi Hajiaghayi, Thomas Kesselheim, and Sahil Singla.
\newblock Prophet secretary for combinatorial auctions and matroids.
\newblock {\em {SIAM} J. Comput.}, 53(6):1641--1662, 2024.

\bibitem[EHLM17]{EsfandiariHLM17}
Hossein Esfandiari, MohammadTaghi Hajiaghayi, Vahid Liaghat, and Morteza
  Monemizadeh.
\newblock Prophet secretary.
\newblock {\em {SIAM} J. Discret. Math.}, 31(3):1685--1701, 2017.

\bibitem[GHKL24]{DBLP:conf/ipco/GuptaHKL24}
Anupam Gupta, Jinqiao Hu, Gregory Kehne, and Roie Levin.
\newblock Pairwise-independent contention resolution.
\newblock In {\em Integer Programming and Combinatorial Optimization - 25th
  International Conference, {IPCO} 2024, Wroc{\l}aw, Poland}, volume 14679 of
  {\em Lecture Notes in Computer Science}, pages 196--209. Springer, 2024.

\bibitem[GKL24]{DBLP:journals/corr/abs-2304-02063}
Anupam Gupta, Gregory Kehne, and Roie Levin.
\newblock Set covering with our eyes wide shut.
\newblock In {\em Proceedings of the 2024 {ACM-SIAM} Symposium on Discrete
  Algorithms, {SODA} 2024, Alexandria, VA, USA}, pages 4530--4553, 2024.

\bibitem[HK82]{Hill1982ComparisonsOS}
Theodore~P. Hill and Robert~P. Kertz.
\newblock Comparisons of stop rule and supremum expectations of i.i.d. random
  variables.
\newblock {\em Annals of Probability}, 10:336--345, 1982.

\bibitem[HKS07]{DBLP:conf/aaai/HajiaghayiKS07}
Mohammad~Taghi Hajiaghayi, Robert~D. Kleinberg, and Tuomas Sandholm.
\newblock Automated online mechanism design and prophet inequalities.
\newblock In {\em Proceedings of the Twenty-Second {AAAI} Conference on
  Artificial Intelligence, Vancouver, British Columbia, Canada}, pages 58--65,
  2007.

\bibitem[ISW23]{DBLP:journals/teco/ImmorlicaSW23}
Nicole Immorlica, Sahil Singla, and Bo~Waggoner.
\newblock Prophet inequalities with linear correlations and augmentations.
\newblock {\em {ACM} Trans. Economics and Comput.}, 11(3-4):1--29, 2023.

\bibitem[KS77]{Krengel1977SemiamartsAF}
Ulrich Krengel and Louis Sucheston.
\newblock Semiamarts and finite values.
\newblock {\em Bulletin of the American Mathematical Society}, 83:745--747,
  1977.

\bibitem[LPS24]{DBLP:conf/sigecom/LivanosP024}
Vasilis Livanos, Kalen Patton, and Sahil Singla.
\newblock Improved mechanisms and prophet inequalities for graphical
  dependencies.
\newblock In {\em Proceedings of the 25th {ACM} Conference on Economics and
  Computation, {EC} 2024, New Haven, CT, USA}, pages 782--805. {ACM}, 2024.

\bibitem[MMR24]{DBLP:conf/sigecom/MaurasMR24}
Simon Mauras, Divyarthi Mohan, and Rebecca Reiffenh{\"{a}}user.
\newblock Optimal stopping with interdependent values.
\newblock In {\em Proceedings of the 25th {ACM} Conference on Economics and
  Computation, {EC} 2024, New Haven, CT, USA}, pages 246--265. {ACM}, 2024.

\bibitem[RCV25]{RCV25}
Surbhi Rajput, Ashish Chiplunkar, and Rohit Vaish.
\newblock Trading prophets: How to trade multiple stocks optimally.
\newblock In {\em 2025 Symposium on Simplicity in Algorithms, {SOSA} 2025, New
  Orleans, LA, USA}, pages 238--252. {SIAM}, 2025.

\bibitem[RSC87]{MR0913569}
Yosef Rinott and Ester Samuel-Cahn.
\newblock Comparisons of optimal stopping values and prophet inequalities for
  negatively dependent random variables.
\newblock {\em Ann. Statist.}, 15(4):1482--1490, 1987.

\bibitem[SC84]{prophI-median-2}
Ester Samuel-Cahn.
\newblock {Comparison of Threshold Stop Rules and Maximum for Independent
  Nonnegative Random Variables}.
\newblock {\em The Annals of Probability}, 12(4):1213 -- 1216, 1984.

\end{thebibliography}

\appendix
\section{Barriers for the Non-i.d.d.\ Model with Transaction Costs}\label{sec:bad-example}

In this section we explore the trading prophets problem with fixed additive transaction costs $\eps>0$.
We show that several natural approaches fail.

\begin{theorem}
Consider the trading prophets problem with fixed additive transaction costs of $\eps>0$.
Then, any online algorithm  whose decisions at time step $t$ are only based on the prices in the next $k$ time steps for any fixed $k$ cannot have a bounded competitive ratio. This holds even if at time step $t$, the algorithm is given the realizations of the prices in time steps $t+1, \ldots, t+k$.
\end{theorem}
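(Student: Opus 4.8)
The plan is, for each fixed $k$, to build a family of instances parametrized by the horizon $T$ on which $\bE[\opt] = \Omega(T)$ while every $k$-lookahead algorithm has $\bE[\alg] = O(\sqrt T)$, so that $\bE[\opt]/\bE[\alg] \to \infty$ and no additive constant can salvage the ratio. The instance is a tiny-step random walk: fix a step size $\delta$ with $k\delta < \eps$ (say $\delta = \eps/(2k)$), start the price at a constant level, and let $X_{t+1} = X_t \pm \delta$ with probability $\tfrac12$ each, reflected at $0$ so that all prices stay nonnegative. The intuition is that with an additive fee a buy-then-sell is profitable only when the price moves by more than $2\eps$ in between, whereas a $k$-lookahead algorithm can only act on a price movement that it actually witnesses within its window, whose total width is at most $k\delta < 2\eps$; hence it can never profit from a round trip. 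The prophet, seeing every realization, can instead harvest the linearly many $\Theta(\eps)$-scale swings that a random walk inevitably produces. In other words, the separation comes from the gap between exploiting a martingale through all of its realizations versus only through its local increments.

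To lower bound $\bE[\opt]$, partition $[1,T]$ into $\Theta(T\delta^2/\eps^2)$ blocks of length $M = \Theta((\eps/\delta)^2)$, tuned so that over $M$ steps the walk, with some constant probability $c_0 > 0$, first descends by $5\eps$ and then ascends by $10\eps$. On such a block the prophet buys at that local minimum and sells at the later maximum for net profit at least $10\eps - 2\eps = 8\eps$, and it trades nothing otherwise (reflection at $0$ only helps). Summing over blocks gives $\bE[\opt] \ge c_0 \cdot 8\eps \cdot \Theta(T\delta^2/\eps^2) = \Omega_{k,\eps}(T)$.

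The heart of the argument is bounding $\bE[\alg]$. Let $\mathcal{F}_t = \sigma(X_1, \dots, X_{t+k})$ be the information available to a $k$-lookahead algorithm at time $t$, and note that away from $0$ the shifted process $Y_t := X_{t+k}$ is an $(\mathcal{F}_t)$-martingale, since $X_{t+k+1} = X_{t+k} \pm \delta$ is a fresh fair coin independent of $\mathcal{F}_t$. Pair the algorithm's purchases and sales into round trips $t_{b,1} < t_{s,1} < t_{b,2} < t_{s,2} < \cdots$, each endpoint being an $(\mathcal{F}_t)$-stopping time bounded by $T$. Using $|X_t - Y_t| \le k\delta$, one gets $\sum_j (X_{t_{s,j}} - X_{t_{b,j}}) \le 2k\delta R + \sum_j (Y_{t_{s,j}} - Y_{t_{b,j}})$ where $R$ is the (random) number of round trips; writing $Y = N + A$ for its $(\mathcal{F}_t)$-martingale part and its increasing drift part $A$ (from reflections at $0$), optional stopping gives $\bE[\sum_j (N_{t_{s,j}} - N_{t_{b,j}})] = 0$, while disjointness of the intervals $[t_{b,j}, t_{s,j}]$ gives $\sum_j (A_{t_{s,j}} - A_{t_{b,j}}) \le A_{T+k}$ with $\bE[A_{T+k}] = O(\delta\sqrt T)$. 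Since $2k\delta - 2\eps < 0$ and the free initial unit (bought at $X_0 = 0$) is worth at most $\bE[\max_t X_t] = O(\delta\sqrt T)$ when sold, this yields $\bE[\alg] \le \bE[R(2k\delta - 2\eps)] + O(\delta\sqrt T) \le O(\delta\sqrt T)$.

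Combining the two bounds, $\bE[\opt]/\bE[\alg] = \Omega(\sqrt T) \to \infty$, so for every $\alpha$ and every constant $c$ there is a horizon violating $\bE[\opt] \le \alpha\,\bE[\alg] + c$; as $k$ was arbitrary, no algorithm whose decisions depend only on a bounded window of future prices can be competitive. The step I expect to be most delicate is the $\bE[\alg]$ bound: one must argue carefully that the trading times of an adaptive (possibly randomized) $k$-lookahead algorithm are genuine stopping times of the shifted filtration $(\mathcal{F}_t)$, so that optional stopping and the telescoping over the random set of round trips are valid, and one must control the reflection drift $A$. A clean alternative that makes $A \equiv 0$ is to replace the reflecting walk by any nonnegative martingale with increments bounded by $\delta$ and quadratic variation $\Theta(T\delta^2)$ (e.g.\ a bounded, suitably rescaled martingale, or a lazy nonnegative walk), which removes the drift term at the cost of a slightly less elementary process.
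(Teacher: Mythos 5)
Your proposal is correct, but it takes a genuinely different route from the paper. The paper's proof is short and elementary: it builds a \emph{deterministic, adaptively chosen} price sequence organized into phases. In each phase the adversary reveals the $k$-step lookahead window, watches whether the algorithm is holding stock, and then chooses the next few prices so that the prophet earns $\Theta(\eps)$ while the algorithm cannot close a profitable round trip. Your proof instead fixes an \emph{oblivious} stochastic instance, a reflected $\pm\delta$ random walk with $k\delta<\eps$, and uses a martingale argument: shifting to $Y_t = X_{t+k}$ makes the holding indicator predictable, so the algorithm's cumulative price gain is (up to the $O(\delta\sqrt T)$ reflection drift) a martingale transform with mean zero, and every round trip loses at least $2\eps - 2k\delta > 0$. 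This buys you a stronger statement in one respect, namely the instance does not adapt to the algorithm and can even be revealed in advance, at the cost of a heavier probabilistic toolkit (stopping/martingale-transform arguments, local-time bounds, and a block decomposition for the prophet). The paper's construction is the easier route to the stated theorem.

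One caveat about your informal intuition (not the formal argument): you write that ``a $k$-lookahead algorithm can only act on a price movement that it actually witnesses within its window, whose total width is at most $k\delta < 2\eps$; hence it can never profit from a round trip.'' That sentence, read literally, is false: a single round trip can span many more than $k$ steps, so $X_{t_{s,j}} - X_{t_{b,j}}$ can be much larger than $k\delta$, and a particular realization of a round trip can certainly be profitable. What is true is your formal claim that it cannot profit \emph{in expectation}, because $\bE\bigl[\sum_t H_t(Y_{t+1}-Y_t)\bigr]$ vanishes up to the reflection drift and the residual $2k\delta - 2\eps$ per round trip is strictly negative. The martingale-transform phrasing (sum over $t$ with the predictable holding indicator $H_t$) is cleaner than optional stopping at a random collection of round-trip times, since it sidesteps worrying about the random number of trips $R$; you already gesture at this being the delicate step, and it is the right thing to be careful about.
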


\begin{proof}
We give a deterministic sequence of prices such that the profit of an optimal solution that knows the entire sequence is unbounded, while an online algorithm that bases its decisions at time step $t$ only on the (deterministic) prices at time steps $t+1, \ldots, t+k$ cannot make any non-negative profit. We prove the theorem for $k=1$. For larger, $k$ just duplicate each price $k$ consecutive times. We assume that the transaction cost is additive $\eps$, but the proof holds also for multiplicative transaction cost.

The price sequence is composed of any number of phases. A phase may start at an arbitrary time step $t$, but for clarity, assume that the first time step in the phase is at time step $t=1$.
The price at $t=1$ is $x_1=1$ and the algorithm is given the price at time step $t=2$ which is $x_2=1+\eps$. We maintain the invariant that if the online algorithm holds the stock at the end of time step $t=1$, then the price paid by the algorithm for the stock is at least $1$. If the algorithm buys the stock at $t=1$ or holds the stock after at the end of time step $t=1$, then we reveal two prices $x_3=1-\eps$ and $x_4=1+\frac{3}{2}\eps$. In this case, the optimal solution can buy the stock at $t=3$ and sell it at $t=4$, making profit of $x_4-x_3- 2\eps = \frac{\eps}{2}$. The online algorithm that purchased the stock for a price of $1$ cannot make any profit. It can, in principle, sell the stock making a negative profit, but as the phase ends and the next price is $1$, there is no reason to do so.

If, on the other hand, at the end of time step $t=1$ the algorithm does not hold the stock, then the price at $t=3$ is $x_3=1+\frac{5}{2}\eps$ and the phase ends. The optimal solution can buy the stock at time step $t=1$ and sell it at time step $t=3$, making a profit of $x_3-x_1-2\eps = \frac{\eps}{2}$. If the algorithm buys the stock at time step $t=2$ and sells it at time $t=3$ it makes a negative profit. In any case, as the phase ends and the next price is again $1$ the algorithm has no reason to buy the stock at time $t=2$ at a price $1+\eps$. This sequence of phases can be repeated  any number of times.
\end{proof}

Finally, we show that even for the i.i.d model with fixed transaction costs, a natural generalization of \Cref{alg:buylsellh} fails.

\begin{observation}
Consider the trading prophets problem with fixed additive transaction costs of $\eps>0$, and consider the algorithm that buys whenever the stock is not in inventory and $x_i \leq \mu_{i+1}-\eps$ and sells whenever the stock is in inventory and $x_i\geq \mu_{i+1}+\eps$. Then this algorithm is not competitive, even in the i.i.d setting.
\end{observation}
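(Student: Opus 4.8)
The plan is to exhibit an i.i.d.\ two-point distribution for which the algorithm's buy threshold $\mu-\eps$ sits strictly below the support of $X$, so the algorithm essentially never buys, while the prophet can still pocket a gap wider than $2\eps$ on a constant fraction of the steps. Concretely, fix $\eps>0$ and let $X=1$ with probability $\tfrac34$ and $X=1+3\eps$ with probability $\tfrac14$, so that $\mu=\bE[X]=1+\tfrac34\eps$ and, in the i.i.d.\ model, $\mu_{i+1}=\mu$ for all $i<T$ while $\mu_{T+1}=0$.

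First I would upper bound the algorithm. For $i<T$ its buy threshold is $\mu-\eps=1-\tfrac14\eps<1$, which lies below both atoms of $X$; for $i=T$ it is $\mu_{T+1}-\eps=-\eps<0$. Hence the algorithm never buys. Starting with a single share, it therefore sells at most once: at the first step whose price clears the sell threshold $\mu+\eps=1+\tfrac74\eps$ (i.e.\ $X_i=1+3\eps$), or at the final step where the threshold is $\mu_{T+1}+\eps=\eps$. In either case it collects at most $(1+3\eps)-\eps=1+2\eps$, so $\bE[\alg]\le 1+2\eps$ for every $T$.

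Next I would lower bound $\bE[\opt]$ by the expected profit of a single feasible (and in fact online) strategy for the prophet: while not holding, buy at the next price equal to $1$; while holding, sell at the next price equal to $1+3\eps$. This strategy alternates between ``waiting for a $1$'' and ``waiting for a $1+3\eps$'', so by an elementary renewal computation the expected length of one complete buy--sell cycle is $\tfrac1{\Pr[X=1]}+\tfrac1{\Pr[X=1+3\eps]}=\tfrac43+4=\tfrac{16}{3}$, and each completed cycle nets exactly $\big((1+3\eps)-\eps\big)-\big(1+\eps\big)=\eps$. Thus this strategy, and hence the prophet, earns at least $\tfrac{3}{16}\eps T-O(\eps)$ in expectation, the $O(\eps)$ absorbing the initial share and a possibly truncated final cycle; in particular $\bE[\opt]=\Omega(T)$ for fixed $\eps$.

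Combining the two bounds, $\bE[\opt]-\alpha\,\bE[\alg]\ge\tfrac{3}{16}\eps T-\alpha(1+2\eps)-O(\eps)\to\infty$ for every fixed $\alpha$, so no pair $(\alpha,c)$ can satisfy $\bE[\opt]\le\alpha\,\bE[\alg]+c$ for all $T$, which is exactly what it means for the algorithm to fail to be competitive. The main obstacle I anticipate is making the lower bound on $\bE[\opt]$ fully rigorous: handling the initial inventory (the prophet's first ``cycle'' is just a free sale), a cycle that may be cut off at time $T$, and the concentration of the number of completed cycles around $\tfrac3{16}T$. All of these are lower-order effects; alternatively one can bypass renewal theory by observing that a prophet can greedily match each price-$(1+3\eps)$ day to an earlier unmatched price-$1$ day, and since $1$'s strictly outnumber $(1+3\eps)$'s, all but $O(\sqrt T)$ of the high days get matched, giving the same $\Omega(T)$ bound.
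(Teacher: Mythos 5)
Your proposal is correct and takes essentially the same approach as the paper: pick an i.i.d.\ two-atom distribution whose mean is high enough that $\mu-\eps$ sits strictly below the lower atom, so the algorithm never buys and extracts only $O(1)$ from the initial share, while $\opt$ grows linearly in $T$. Your distribution ($1$ w.p.\ $3/4$, $1+3\eps$ w.p.\ $1/4$) is a variant of the paper's ($1-\eps/2$ w.p.\ $4/5$, $1+2\eps$ w.p.\ $1/5$), and the $\alg$ upper bound is essentially identical.

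The one place you diverge — and where you yourself flag extra work — is the $\opt$ lower bound. You propose a renewal-cycle or greedy-matching argument, which does work but requires handling overshoot, truncated cycles, and concentration. The paper sidesteps all of that with a cleaner device: lower bound $\opt$ by the profit extractable from \emph{disjoint consecutive ascending pairs} alone. For each $i$, if $X_i$ is the low atom and $X_{i+1}$ is the high atom (which for your distribution happens w.p.\ $\tfrac{3}{4}\cdot\tfrac{1}{4}=\tfrac{3}{16}$), a buy at $i$ and sell at $i+1$ nets $(1+3\eps-\eps)-(1+\eps)=\eps$; such pairs are automatically disjoint (consecutive pairs cannot both be low-then-high), so by linearity of expectation $\bE[\opt]\ge (T-1)\cdot\tfrac{3}{16}\cdot\eps$ with no renewal theory or concentration needed. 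You may want to replace your renewal sketch with this one-line counting argument, since it removes the loose ends you mention.
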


\begin{proof}
Consider the following distribution of prices.
\begin{align*}
    X_{i} & = \begin{cases}1+2\eps & \text{w.p. } \frac{1}{5},\\
    1-\frac{\eps}{2}&  \text{w.p. }\frac{4}{5}.\end{cases} & \forall i\in [T]
\end{align*}

For this distribution, $\mu= \frac{1}{5}(1+2\eps) + \frac{4}{5}(1-\frac{\eps}{2})=1$. Hence the algorithm never buys and can make a profit of at most $1+2\eps-\eps$ by selling the initial inventory. On the other hand,
\begin{align*}
      \bE[\opt]& \geq \sum_{i=1}^{T-1}Pr[X_i=1-\eps/2, X_{i+1}=1+2\eps]\cdot \big((1+2\eps -\eps) -(1-\frac{\eps}{2}+\eps)\big) \\
      & \geq (T-1) \cdot \frac{4}{25}\cdot \big((1+2\eps -\eps) -(1-\frac{\eps}{2}+\eps)\big) = (T-1) \cdot \frac{2\eps}{25}.
\end{align*}
Thus, for a large enough $T$, the algorithm is not competitive.
\end{proof}

\end{document}